\documentclass[aps,pra,preprint,notitlepage,nofootinbib]{revtex4-2}

\usepackage{amsmath,amssymb,amsthm,mathtools,bm}
\usepackage{hyperref}
\usepackage{physics}
\usepackage{graphicx}
\usepackage{enumitem}
\usepackage{placeins}
\usepackage[T1]{fontenc}
\usepackage{lmodern}

\usepackage{xcolor}
\definecolor{linkblack}{RGB}{20,20,20}
\definecolor{citeblue}{RGB}{0,70,140}
\definecolor{urlblue}{RGB}{0,90,120}
\usepackage{float}
\hypersetup{
  colorlinks=true,
  linkcolor=linkblack,
  citecolor=citeblue,
  urlcolor=urlblue
}

\newtheorem{theorem}{Theorem}
\newtheorem{proposition}{Proposition}
\newtheorem{corollary}{Corollary}
\newtheorem{remark}{Remark}

\begin{document}


\title{Resonant delay in a stationary quantum clock:\\
Lifting the threshold mask}

\author{Paul~C.~W.~Davies} 
\email{paul.davies@asu.edu}
\affiliation{Department of Physics, Arizona State University, Tempe, Arizona 85287, USA}
\affiliation{Beyond Center for Fundamental Concepts in Science,  
Arizona State University, Tempe, AZ 85287, USA}

\author{Damien A. Easson}
\email{easson@asu.edu}
\affiliation{Department of Physics, Arizona State University, Tempe, Arizona 85287, USA}
\affiliation{Beyond Center for Fundamental Concepts in Science, 
Arizona State University, Tempe, AZ 85287, USA}


\begin{abstract}
Quantum transit times have a long history of inequivalent definitions, including
phase times, dwell times, and quantum-clock constructions.  In this context we
revisit the Salecker--Wigner--Peres stationary quantum clock as a phase-sensitive
scattering observable, with clock time defined by the energy derivative of the
transmission phase shift across the interaction region.  For real compactly
supported one-dimensional potentials, we show that the raw stationary Peres
clock generically contains a universal \(1/\sqrt{E}\) continuum-edge term whose
coefficient is fixed by low-energy scattering data.  For the attractive square
well, this threshold singularity is inherited from the vanishing exterior
momentum and the associated scattering matching, rather than from resonant
delay itself.
We derive the exact stationary clock time for the square well and introduce a
new threshold-subtracted clock observable.  Away from exceptional zero-energy
tuning, the subtraction removes the universal low-energy term and isolates the
resonant contribution.  Comparison with the dwell time and the transmission
Wigner phase delay shows that the threshold-subtracted clock acquires the
expected local Lorentzian form near isolated transmission resonances.  Near the
continuum edge, if \(\varepsilon\) denotes the detuning from threshold, the
resonant peak grows only as \(\varepsilon^{-1/2}\), whereas the unsubtracted
threshold background grows as \(\varepsilon^{-3/2}\).  A symmetric
barrier--well--barrier cavity and a numerical asymmetric two-step attractive
well provide complementary controls.  The result is a new threshold-subtracted
stationary-clock candidate that separates universal threshold kinematics from
pole-sensitive resonant delay.
\end{abstract}

\maketitle
\newpage
\section{Introduction}

The question of how long a quantum particle spends in, or takes to cross, a
scattering region has no unique answer.  Different operational constructions
lead to different observables, including phase times, dwell times, traversal
times, and clock-based definitions
\cite{Hauge1989,ButtikerLandauer1982,LandauerMartin1994,
MugaLeavens2000,Winful2006,Field2022}.
Clock-based approaches include the Salecker--Wigner--Peres clock and related
Larmor-clock constructions
\cite{SaleckerWigner1958,Peres1980,Buttiker1983,Leavens1993,Davies2005,
Calcada2009,Lunardi2011}.
Stationary quantum-clock constructions provide a convenient way to probe
scattering-time observables without committing to a full time-dependent
wave-packet analysis. In one-dimensional scattering, however, threshold behavior can qualitatively distort the stationary clock reading, masking the resonant response that one would most like to isolate. The present paper is motivated by a simple question: can one separate the universal low-energy threshold contribution in the Salecker--Wigner--Peres (SWP) stationary clock time~\cite{SaleckerWigner1958,Peres1980,Davies2005,Sokolovski2017} from the genuinely resonant contribution associated with transmission resonances? Related threshold-singularity issues for dwell-time and phase-delay observables have been discussed in adjacent contexts~\cite{Kelkar2007}.

We answer this question in two stages. First, we prove a general low-energy theorem for the stationary Peres clock for real compactly supported one-dimensional potentials. In the generic sector, where there is no zero-energy resonance, the raw clock time contains a $1/\sqrt{E}$ threshold term whose coefficient is fixed by low-energy scattering data. In the exceptional zero-energy-resonant sector, the same scaling persists with a modified coefficient. This identifies the threshold obstruction at the level of general scattering theory rather than only in a special model.

Second, we work out the attractive square well in detail as an analytically
solvable laboratory for the subtraction strategy. In this model the threshold
coefficient, exact clock time, and threshold-subtracted observable can all be
written explicitly and compared directly with the dwell time and the
transmission Wigner phase delay. This lets us quantify the masking mechanism
and show how the resonant response is recovered once the threshold contribution
is removed. A symmetric barrier--well--barrier cavity and a numerical
asymmetric two-step attractive well then serve as complementary non-square
control examples.

Our aim is not to settle the broader tunneling-time debate or to identify a
unique preferred time observable~\cite{Hauge1989,LandauerMartin1994,
MugaLeavens2000,Winful2006,Field2022,Sokolovski2017}.
Rather, within the stationary Peres-clock phase-time convention, we identify a
universal low-energy threshold mask fixed by zero-energy scattering data. Once
this mask is subtracted, the resulting stationary-clock observable cleanly
exposes the resonant delay.

This paper is organized as follows. Section~\ref{sec:model} defines the square-well model and the stationary clock convention. Section~\ref{sec:general-threshold} states the general low-energy threshold theorem, whose transfer-matrix derivation is given in Appendix~\ref{app:low-energy}. Sections~\ref{sec:squarewell}--\ref{sec:resonances} work out the square well, compare with the dwell time and Wigner phase delay, and analyze the threshold masking of resonances. Sections~\ref{sec:bwb} and~\ref{sec:numerics} present complementary control examples: a barrier--well--barrier cavity and a numerical two-step well. We conclude in Sec.~\ref{sec:discussion}.

\section{Square-well setup and stationary clock definition}
\label{sec:model}

We consider the one-dimensional attractive square well
\begin{equation}
V(x)=
\begin{cases}
-\,V_0, & 0<x<a,\\
0, & x<0 \ \text{or}\ x>a,
\end{cases}
\qquad V_0>0,
\label{eq:V-squarewell}
\end{equation}
with stationary Schr\"odinger equation
\begin{equation}
-\psi''(x)+V(x)\psi(x)=E\psi(x),
\qquad E>0.
\label{eq:Sch-eq}
\end{equation}
We use
\begin{equation}
k=\sqrt{E},
\qquad
q=\sqrt{E+V_0},
\qquad
\kappa:=\sqrt{V_0}.
\label{eq:kq-def}
\end{equation}
For left incidence, the transmission amplitude is
\begin{equation}
t(E)=\frac{e^{-ika}}
{\cos(qa)-i\dfrac{k^2+q^2}{2kq}\sin(qa)}.
\label{eq:t-squarewell}
\end{equation}
We define the stationary clock phase by restoring the free phase across the interaction region,
\begin{equation}
\delta_P(E):=\arg\!\big[t(E)e^{ika}\big],
\qquad
\tau_P(E):=\dv{\delta_P}{E}.
\label{eq:Peres-def}
\end{equation}

\begin{remark}[Operational meaning]
The quantity \(\tau_P(E)\) should be understood as a stationary,
phase-sensitive scattering observable rather than as a single-event arrival
time. In principle, the phase \(\arg[t(E)e^{ikL}]\) can be extracted by
interferometric comparison of the transmitted wave with a reference wave, or
equivalently by reconstructing the complex transmission amplitude over a narrow
energy window. The clock time is then the energy derivative of this measured
phase. The threshold-subtracted clock \(\tau_{\rm sub}\) is operationally
defined by the same phase measurement together with the independently determined
low-energy threshold coefficient \(\ell_{\rm thr}\). Thus the subtraction removes a universal continuum-edge
contribution from a measurable stationary phase-time observable.
\end{remark}

In this paper, ``stationary SWP/Peres clock time'' refers to this stationary phase-time convention obtained from the phase of $t(E)e^{ika}$. We do not use alternative postselected regional-clock definitions in which the clock is coupled only to a specified subregion and the resulting quantity may coincide with a dwell-type time. The quantity $\tau_P(E)$ is the raw stationary SWP/Peres clock time whose threshold structure we seek to disentangle~\cite{Peres1980,Davies2005}.

\section{A general low-energy threshold theorem}
\label{sec:general-threshold}

Before turning back to the square well, it is useful to isolate the threshold structure of the stationary clock in a model-independent form. We therefore consider a general real compactly supported potential with
\begin{equation}
\operatorname{supp}V\subset[0,L].
\end{equation}
Writing $E=k^2$ with $k>0$, let $t(k)$ denote the transmission amplitude for left incidence. The stationary Peres phase and clock time are defined as in Sec.~\ref{sec:model}, with the free phase restored across the support length $L$:
\begin{equation}
\delta_P(E):=\arg\!\big[t(k)e^{ikL}\big],
\qquad
\tau_P(E):=\dv{\delta_P}{E}.
\label{eq:deltaP-body}
\end{equation}

To state the threshold law precisely, let
\begin{equation}
\mathsf T(k)=
\begin{pmatrix}
A(k) & B(k)\\
C(k) & D(k)
\end{pmatrix}
\end{equation}
denote the transfer matrix across the support, defined in Appendix~\ref{app:low-energy}. Since the Schr\"odinger equation depends on $k$ only through $k^2$, the entries are even functions of $k$ and admit expansions
\begin{align}
A(k)&=A_0+A_2k^2+A_4k^4+O(k^6), &
B(k)&=B_0+B_2k^2+B_4k^4+O(k^6), \nonumber\\
C(k)&=C_0+C_2k^2+C_4k^4+O(k^6), &
D(k)&=D_0+D_2k^2+D_4k^4+O(k^6),
\label{eq:ABCD-main-exp}
\end{align}
with real coefficients.

The relevant threshold distinction is between the generic sector $C_0\neq0$ and the exceptional sector $C_0=0$. As shown in Appendix~\ref{app:low-energy}, the condition $C_0=0$ is equivalent to the existence of a nontrivial bounded zero-energy solution on $\mathbb R$.

\begin{theorem}[Low-energy threshold law for the stationary Peres clock]
\label{thm:general-threshold-body}
Let $V$ be real and compactly supported in $[0,L]$.

\begin{enumerate}[label=(\roman*)]
\item \textbf{Generic sector.} If $C_0\neq0$, then
\begin{equation}
\delta_P(E)=\delta_* -2\ell_{\rm thr}\sqrt{E}+O(E^{3/2}),
\qquad
\tau_P(E)= -\frac{\ell_{\rm thr}}{\sqrt{E}}+O(\sqrt{E}),
\qquad E\to0^+,
\label{eq:tauP-generic-body}
\end{equation}
where
\begin{equation}
\ell_{\rm thr}:=-\frac{A_0+D_0}{2C_0},
\label{eq:ellthr-main}
\end{equation}
and $\delta_*$ is an $E$-independent phase determined by the sign of $C_0$ and the branch choice.

\item \textbf{Exceptional sector.} If $C_0=0$, then $A_0D_0=1$ and $A_0+D_0\neq0$. Define
\begin{equation}
t_0:=\frac{2}{A_0+D_0},
\qquad
\tilde\ell_{\rm thr}:=-\frac{B_0-C_2}{2(A_0+D_0)}.
\label{eq:elltildemain}
\end{equation}
Then
\begin{equation}
\delta_P(E)=\arg t_0 -2\tilde\ell_{\rm thr}\sqrt{E}+O(E^{3/2}),
\qquad
\tau_P(E)= -\frac{\tilde\ell_{\rm thr}}{\sqrt{E}}+O(\sqrt{E}),
\qquad E\to0^+.
\label{eq:tauP-exceptional-body}
\end{equation}
Here $t_0$ is real and nonzero. In special cases, such as the symmetric square-well half-bound tuning, one further has $t_0=\pm1$.
\end{enumerate}
\end{theorem}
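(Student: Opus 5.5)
The plan is to write $t(k)e^{ikL}$ exactly as a rational expression in $k$ and the transfer-matrix entries, and then to expand its argument in $k=\sqrt{E}$. I take the convention (to be matched with Appendix~\ref{app:low-energy}) that $\mathsf T(k)$ maps $(\psi(0),\psi'(0))$ to $(\psi(L),\psi'(L))$. Because the Wronskian is constant, $\det\mathsf T=AD-BC=1$. Because the potential is real, all entries are real; they are entire in $k^2$, which justifies the expansions \eqref{eq:ABCD-main-exp} and lets us differentiate remainders term by term.

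For left incidence, set $\psi=e^{ikx}+re^{-ikx}$ for $x<0$ and $\psi=\tau e^{ik(x-L)}$ for $x>L$, where $\tau:=t\,e^{ikL}$ is the quantity whose phase is $\delta_P$. Imposing
\begin{equation*}
\tau=A(1+r)+ikB(1-r),\qquad ik\tau=C(1+r)+ikD(1-r),
\end{equation*}
and eliminating $r$ with $AD-BC=1$ gives
\begin{equation*}
t(k)e^{ikL}=\frac{2ik}{-C(k)+ik\,[A(k)+D(k)]+k^2B(k)}.
\end{equation*}
The rest of the proof is a parity argument on this formula.

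\emph{Generic sector ($C_0\neq0$).} Factor out $-C$ and write
\begin{equation*}
\tau=\frac{-2ik/C}{1-ik(A+D)/C-k^2B/C}.
\end{equation*}
The numerator contributes the constant phase $\delta_*=\arg(-2i/C_0)$, which is $\mp\pi/2$ according to the sign of $C_0$, up to the branch choice. In the denominator, the real part $1-k^2B/C$ is even in $k$ and equals $1+O(k^2)$. The imaginary part $-k(A+D)/C$ is odd in $k$, with real analytic coefficients.

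Hence the argument of the denominator, $\arctan(\mathrm{Im}/\mathrm{Re})$, is a real-analytic odd function of $k$ near $0$, equal to $-k(A_0+D_0)/C_0+O(k^3)$. This gives
\begin{equation*}
\delta_P=\delta_*+k\,\frac{A_0+D_0}{C_0}+O(k^3)=\delta_*-2\ell_{\rm thr}\sqrt{E}+O(E^{3/2}).
\end{equation*}
Since the remainder is an analytic odd series in $k$, it can be differentiated term by term. Using $dk/dE=1/(2k)$, the derivative of $k^3$ gives $O(k)=O(\sqrt{E})$, and we obtain $\tau_P=-\ell_{\rm thr}/\sqrt{E}+O(\sqrt{E})$.

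\emph{Exceptional sector ($C_0=0$).} Evaluating the determinant at $k=0$ gives $A_0D_0=1$. Thus $A_0$ and $D_0$ are nonzero real numbers of the same sign, so $|A_0+D_0|\ge2$ and in particular $A_0+D_0\neq0$.

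Now write $C=k^2\tilde C(k)$, where $\tilde C$ is even and $\tilde C(0)=C_2$. After cancelling a factor of $k$,
\begin{equation*}
\tau=\frac{2}{(A+D)-ik\,(B-\tilde C)}.
\end{equation*}
At $k=0$ this equals $t_0=2/(A_0+D_0)$, which is real and nonzero. Its argument is $\arg t_0+\arctan\!\big(k(B-\tilde C)/(A+D)\big)$. The arctangent is again odd and analytic in $k$, equal to $k(B_0-C_2)/(A_0+D_0)+O(k^3)$. This yields $\delta_P=\arg t_0-2\tilde\ell_{\rm thr}\sqrt{E}+O(E^{3/2})$, and differentiating exactly as in the generic sector gives the stated $\tau_P$.

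The special case $t_0=\pm1$ corresponds to $A_0+D_0=\pm2$. Together with $A_0D_0=1$ this forces $A_0=D_0=\pm1$, which I would check directly for the symmetric square-well half-bound tuning.

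The main obstacle is bookkeeping rather than analysis. The transfer-matrix convention must match Appendix~\ref{app:low-energy}, since a transposed or inverted convention would permute $B$ and $C$ and flip signs in $\ell_{\rm thr}$. I would therefore first verify the displayed formula for $t(k)e^{ikL}$ against the square-well amplitude \eqref{eq:t-squarewell}. The second point needing care is the claim that the error is $O(E^{3/2})$ rather than $O(E)$. This rests entirely on the observation that the denominator has the structure (even real part) $+\,i\,$(odd imaginary part), so its phase is odd in $k$ and no $k^2$ term can appear.
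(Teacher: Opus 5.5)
Your proposal is correct and follows the paper's proof in Appendix~\ref{app:low-energy}: the same Cauchy-data formula $t(k)e^{ikL}=2ik/\Delta(k)$ with $\Delta=ik(A+D)+k^2B-C$, then factoring out $-C$ in the generic sector or the $O(k)$ term in the exceptional sector, and reading off the phase. The only difference is that the paper inverts the series explicitly to order $k^3$, whereas you use the structural fact that the denominator is an even real part plus $i$ times an odd imaginary part; this makes the phase an odd analytic function of $k$ and so also justifies differentiating the $O(k^3)$ remainder, a step the paper leaves implicit.
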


\begin{corollary}[Generic threshold subtraction]
\label{cor:general-threshold-body}
In the generic sector, the threshold-subtracted clock
\begin{equation}
\tau_{\rm sub}(E):=\tau_P(E)+\frac{\ell_{\rm thr}}{\sqrt{E}}
\label{eq:tausub-body}
\end{equation}
is finite at threshold, in fact
\begin{equation}
\tau_{\rm sub}(E)=O(\sqrt{E}),
\qquad E\to0^+.
\label{eq:tausub-threshold-body}
\end{equation}
\end{corollary}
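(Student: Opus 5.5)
The corollary follows directly from part (i) of Theorem~\ref{thm:general-threshold-body}, so the plan is to substitute and check error orders. In the generic sector $C_0\neq0$, the theorem gives
\begin{equation*}
\tau_P(E)=-\frac{\ell_{\rm thr}}{\sqrt{E}}+R(E),\qquad R(E)=O(\sqrt{E}),\qquad E\to0^+ .
\end{equation*}
Adding $\ell_{\rm thr}/\sqrt{E}$ as in \eqref{eq:tausub-body} cancels the singular term exactly, since the coefficient in the definition of $\tau_{\rm sub}$ is by construction the same $\ell_{\rm thr}$ defined in \eqref{eq:ellthr-main}. This leaves $\tau_{\rm sub}(E)=R(E)=O(\sqrt{E})$, which is \eqref{eq:tausub-threshold-body}. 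Finiteness at threshold follows at once: $\tau_{\rm sub}(E)\to0$ as $E\to0^+$, so it extends continuously to $E=0$ with value $0$.

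The one point that needs care is that the remainder in $\tau_P$ really is $O(\sqrt{E})$ and not just the formal derivative of the $O(E^{3/2})$ term in $\delta_P$. Differentiating an $O$-bound is not legitimate in general. I would justify it from the structure established in Appendix~\ref{app:low-energy}.

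\begin{itemize}
\item The entries $A,B,C,D$ are even analytic functions of $k$ near $0$.
\item $t(k)e^{ikL}$ is therefore a ratio of analytic functions of $k$, with nonvanishing denominator near $k=0$ in the generic sector, since its leading behaviour is governed by $C_0\neq0$.
\item Hence $\delta_P$, taken on a continuous branch, is real-analytic in $k$ on a neighbourhood of $0^+$. Its expansion $\delta_*-2\ell_{\rm thr}k+c_3k^3+\dots$ contains only odd powers beyond the constant, which is the content of the theorem's $O(E^{3/2})$ statement.
\item Termwise differentiation is then valid. Using $dk/dE=1/(2k)$ gives
\begin{equation*}
\tau_P=\frac{1}{2k}\,\frac{d\delta_P}{dk}=-\frac{\ell_{\rm thr}}{k}+\frac{3c_3}{2}\,k+O(k^3),
\end{equation*}
so $R(E)=O(\sqrt{E})$.
\end{itemize}

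If the theorem as stated is taken to already include this differentiated remainder, as its second formula asserts, this step is immediate and the corollary is pure cancellation. I expect no real obstacle. The only subtlety is this analyticity or branch-continuity remark, together with a note that the constant phase $\delta_*$ plays no role because it disappears under the energy derivative.
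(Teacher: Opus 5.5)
Your proposal is correct and follows the paper's route: the corollary is just the exact cancellation of the $-\ell_{\rm thr}/\sqrt{E}$ term in Theorem~\ref{thm:general-threshold-body}(i), whose $O(\sqrt{E})$ remainder the appendix obtains by differentiating $\delta_P=\delta_*-2\ell_{\rm thr}k+O(k^3)$. Your analyticity argument usefully makes explicit the termwise differentiation the paper takes for granted. One small correction: the absence of all even powers beyond the constant follows from the symmetry $t(-k)e^{-ikL}=\overline{t(k)e^{ikL}}$ (real, even transfer-matrix entries), not from the $O(E^{3/2})$ bound, although the vanishing $k^2$ coefficient is all that the $O(\sqrt{E})$ claim actually requires.
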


The proof follows from the low-energy transfer-matrix expansion given in Appendix~\ref{app:low-energy}.

\begin{remark}
Within the generic sector \(C_0\neq0\), the coefficient \(\ell_{\rm thr}\) may vanish on special parameter subsets. In that case the displayed \(1/\sqrt E\) contribution is absent and the next term in the low-energy expansion controls the threshold behavior. When referring to a generic threshold divergence, we assume \(\ell_{\rm thr}\neq0\).
\end{remark}

\begin{remark}[Barrier tunneling versus barrier-top crossing]
The threshold law is not restricted to attractive wells. It also applies to
positive barriers. For example, consider the square barrier
\[
V(x)=U,\qquad 0<x<a,\qquad U>0 .
\]
For fixed \(U>0\) and \(a>0\), the low-energy limit \(E\to0^+\) is an
under-barrier scattering problem: the asymptotic momentum \(k=\sqrt E\)
vanishes, while the interior wave number is imaginary. The zero-energy
Cauchy-data transfer matrix is
\[
T_0=
\begin{pmatrix}
\cosh(\mu a) & \mu^{-1}\sinh(\mu a)\\
\mu\sinh(\mu a) & \cosh(\mu a)
\end{pmatrix},
\qquad \mu=\sqrt U .
\]
Hence
\[
\ell_{\rm thr}
=
-\frac{A_0+D_0}{2C_0}
=
-\frac{\coth(\mu a)}{\mu},
\]
and the stationary Peres clock has the generic low-energy behavior
\[
\tau_P(E)
=
-\frac{\ell_{\rm thr}}{\sqrt E}
+O(\sqrt E)
=
\frac{\coth(\sqrt U\,a)}{\sqrt U\,\sqrt E}
+O(\sqrt E).
\]
This is a fixed-barrier low-energy asymptotic. It is not uniform in the
removed-barrier limits \(a\to0\) or \(U\to0\), where \(C_0\to0\) and the
problem leaves the generic \(C_0\neq0\) sector. The apparent divergence of the
coefficient in those limits therefore reflects noncommuting limits, not a
physical time for a nonexistent barrier.

This should be distinguished from the barrier-top limit \(E\to U\), where the
interior evanescent wave number vanishes but the finite-width transfer matrix
is regular. Thus the subtraction developed here removes the external
low-energy threshold contribution, not a singularity at the top of the barrier.
\end{remark}

\section{Exact square-well clock time and threshold subtraction}
\label{sec:squarewell}

We now specialize the general threshold theorem of Sec.~\ref{sec:general-threshold} to the attractive square well, where the transmission amplitude, stationary clock time, threshold coefficient, and subtracted clock can all be written explicitly. The square well therefore provides a useful analytic laboratory for exhibiting the general threshold obstruction and its removal.

\begin{proposition}[Exact stationary clock time]
For the attractive square well, the stationary Peres clock time is
\begin{equation}
\tau_P(E)=
\frac{
 a k^2 q\,(k^2+q^2)-\dfrac12 (q^2-k^2)^2\sin(2qa)
}{
 kq\Big(4k^2q^2+(q^2-k^2)^2\sin^2(qa)\Big)
}.
\label{eq:tauP-exact}
\end{equation}
\end{proposition}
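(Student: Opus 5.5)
The plan is to differentiate the phase directly, using the explicit form of the square-well transmission amplitude \eqref{eq:t-squarewell}. Restoring the free phase gives
\[
t(E)e^{ika}=\frac{1}{\cos(qa)-i\,s\sin(qa)},\qquad s(E):=\frac{k^2+q^2}{2kq}.
\]
It follows that $\delta_P=-\arg\big[\cos(qa)-i s\sin(qa)\big]$. Locally, wherever $\cos(qa)\neq0$, this means $\delta_P=\arctan\!\big(s\tan(qa)\big)$ modulo a constant multiple of $\pi$.

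The branch constant does not affect $\tau_P$. To avoid any issue at the zeros of $\cos(qa)$, I would compute instead with $\dv{E}\arg z=\operatorname{Im}(z'/z)=(x y'-y x')/(x^2+y^2)$ for $z=x+iy$. Since $x^2+y^2=\cos^2(qa)+s^2\sin^2(qa)>0$, this gives a globally valid smooth formula. On the open set $\cos(qa)\neq0$ it agrees with differentiating the arctan, and the two extend to each other by continuity.

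The key steps are as follows.

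\emph{Step 1: derivatives of the ingredients.} Using $k'=1/(2k)$ and $q'=1/(2q)$ (primes denote $\dv{E}$), I get $(kq)'=(k^2+q^2)/(2kq)$. Hence
\[
s'=\frac{4k^2q^2-(k^2+q^2)^2}{4k^3q^3}=-\frac{(q^2-k^2)^2}{4k^3q^3}.
\]

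\emph{Step 2: assemble the derivative.} The formula becomes
\[
\tau_P=\frac{s'\sin(qa)\cos(qa)+s\,a\,q'}{\cos^2(qa)+s^2\sin^2(qa)}.
\]

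\emph{Step 3: simplify the denominator.} Using $(k^2+q^2)^2-4k^2q^2=(q^2-k^2)^2$, the denominator is
\[
\cos^2(qa)+s^2\sin^2(qa)=\frac{4k^2q^2+(q^2-k^2)^2\sin^2(qa)}{4k^2q^2}.
\]

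\emph{Step 4: simplify the numerator and clear the common factor.} Multiplying the numerator by $4k^2q^2$ gives
\[
a\,k(k^2+q^2)-\frac{(q^2-k^2)^2\sin(2qa)}{2kq}=\frac{a k^2 q(k^2+q^2)-\tfrac12(q^2-k^2)^2\sin(2qa)}{kq}.
\]
Dividing this by the Step 3 denominator reproduces \eqref{eq:tauP-exact}.

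The only delicate point is the branch of $\arg$, and the $\operatorname{Im}(z'/z)$ formulation settles it. Everything else is routine algebra. The step I would double-check most carefully is Step 4, specifically the relative powers of $q$ coming from $s\,a\,q'$, since slips there change the prefactor of the $a$-term.

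As a sanity check, I would verify the threshold behaviour against Theorem~\ref{thm:general-threshold-body}. As $k\to0$, the leading term of \eqref{eq:tauP-exact} is $-\tfrac12 q^4\sin(2qa)/\big(kq\cdot q^4\sin^2(qa)\big)=-\cot(qa)/(kq)$ with $q\to\kappa$. This should match $-\ell_{\rm thr}/\sqrt{E}$ computed from the zero-energy transfer matrix of the well, where $\ell_{\rm thr}=\cot(\kappa a)/\kappa$.
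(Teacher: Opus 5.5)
Your proof is correct and follows the same route as the paper: you write $t(E)e^{ika}$ as the reciprocal of $\cos(qa)-is\sin(qa)$ and differentiate the argument via $(xy'-yx')/(x^2+y^2)$, which is the paper's $\tau_P=(AB'-BA')/(A^2+B^2)$, and your Steps 1--4 (in particular $s'=-(q^2-k^2)^2/(4k^3q^3)$ and the identity $(k^2+q^2)^2-4k^2q^2=(q^2-k^2)^2$) correctly carry out the algebra the paper leaves as ``straightforward.'' Your threshold sanity check is also consistent, since the zero-energy transfer matrix of the well gives $A_0=D_0=\cos(\kappa a)$ and $C_0=-\kappa\sin(\kappa a)$, hence $\ell_{\rm thr}=\cot(\kappa a)/\kappa$.
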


\begin{proof}
Write
\begin{equation}
t(E)e^{ika}=\frac{1}{A-iB},
\qquad
A=\cos(qa),
\qquad
B=\frac{k^2+q^2}{2kq}\sin(qa).
\end{equation}
Then
\begin{equation}
\tau_P(E)=\frac{A\,B'-B\,A'}{A^2+B^2},
\end{equation}
where primes denote derivatives with respect to $E$. Straightforward algebra yields Eq.~\eqref{eq:tauP-exact}.
\end{proof}

In the language of Theorem~\ref{thm:general-threshold-body}, the generic square-well sector is characterized by $\sin(\kappa a)\neq 0$, while the exceptional zero-energy-resonant / half-bound sector is characterized by $\sin(\kappa a)=0$.

The exact expression immediately reveals two distinct low-energy regimes: the generic case $\sin(\kappa a)\neq 0$ and the threshold-resonant / half-bound tuning $\sin(\kappa a)=0$.

\begin{proposition}[Generic threshold law]
Assume $\sin(\kappa a)\neq 0$. Then as $E\to0^+$,
\begin{equation}
\tau_P(E)=
-\frac{\cot(\kappa a)}{\kappa\sqrt{E}}
+O(\sqrt{E}).
\label{eq:tauP-threshold}
\end{equation}
Equivalently,
\begin{equation}
\tau_P(E)\sim \frac{C_{\rm thr}}{\sqrt{E}},
\qquad
C_{\rm thr}=-\frac{\cot(\kappa a)}{\kappa}.
\label{eq:Cthr-def}
\end{equation}
\end{proposition}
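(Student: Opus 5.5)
The plan is to obtain the result in two independent ways: as a direct specialization of Theorem~\ref{thm:general-threshold-body}(i), and then as a cross-check by expanding the exact formula \eqref{eq:tauP-exact}. The main route computes the zero-energy transfer matrix of the square well and reads off $\ell_{\rm thr}$. I will use the same Cauchy-data convention $(\psi,\psi')$ that the barrier remark uses for $T_0$.

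At $E=0$ the interior equation is $\psi''=-\kappa^2\psi$ on $(0,a)$, with $\kappa=\sqrt{V_0}$. Propagating $(\psi,\psi')$ from $x=0$ to $x=a$ gives
\begin{equation*}
T_0=\begin{pmatrix}\cos(\kappa a) & \kappa^{-1}\sin(\kappa a)\\ -\kappa\sin(\kappa a) & \cos(\kappa a)\end{pmatrix}.
\end{equation*}
This is the barrier matrix with $\mu\to i\kappa$. Hence $C_0=-\kappa\sin(\kappa a)$, so the generic sector $C_0\neq0$ is exactly the hypothesis $\sin(\kappa a)\neq0$. Equation~\eqref{eq:ellthr-main} then gives
\begin{equation*}
\ell_{\rm thr}=-\frac{2\cos(\kappa a)}{2(-\kappa\sin(\kappa a))}=\frac{\cot(\kappa a)}{\kappa}.
\end{equation*}
Theorem~\ref{thm:general-threshold-body}(i) with $L=a$ then yields \eqref{eq:tauP-threshold}, including the $O(\sqrt E)$ remainder, and $C_{\rm thr}=-\ell_{\rm thr}$.

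For the independent check I will expand \eqref{eq:tauP-exact} in $k$ with $q^2=\kappa^2+k^2$.

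The numerator is $-\tfrac12\kappa^4\sin(2\kappa a)+O(k^2)$. Here the term $ak^2q(k^2+q^2)$ is itself $O(k^2)$.

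The denominator is $k\,q\,[\kappa^4\sin^2(\kappa a)+O(k^2)]$. Since $\sin(\kappa a)\neq0$, it equals $k\kappa^5\sin^2(\kappa a)\,[1+O(k^2)]$.

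The ratio is therefore $-\cot(\kappa a)/(\kappa k)\,[1+O(k^2)]$, which reproduces \eqref{eq:tauP-threshold}.

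The one point needing care is that the corrections really are relative $O(k^2)=O(E)$ and not $O(k)$. Every factor other than the explicit $1/k$ is a smooth function of $k^2$, through $q=\sqrt{\kappa^2+k^2}$ and $\sin(qa),\sin(2qa)$, which are analytic near $k=0$ since $\kappa>0$. So the remainder is $O(k)=O(\sqrt E)$, as claimed.

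I expect the main obstacle to be purely conventional: confirming that the transfer matrix in Appendix~\ref{app:low-energy} is the Cauchy-data map in the same orientation as above, so that $C_0$ carries the sign $-\kappa\sin(\kappa a)$. A sign flip there would flip the sign of $C_{\rm thr}$. The direct expansion of the exact formula settles this independently, which is why I will include it.
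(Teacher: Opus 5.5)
Your proposal is correct. Your ``independent check'' is exactly the paper's proof: the paper simply expands \eqref{eq:tauP-exact} with $q=\kappa+k^2/(2\kappa)+O(k^4)$ and reads off the $1/k$ term.

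Your main route, specializing Theorem~\ref{thm:general-threshold-body}(i) through the zero-energy Cauchy-data matrix, is genuinely different from that. The appendix settles your orientation worry. There $\mathsf T$ is built from $u(L),v(L),u'(L),v'(L)$ with $u(0)=1,\ u'(0)=0,\ v(0)=0,\ v'(0)=1$. Hence $C_0=u'(a,0)=-\kappa\sin(\kappa a)$, exactly as you have it; this is also the matrix $M_W$ the paper uses for the cavity.

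The theorem route buys structure:
\begin{itemize}
\item it shows that the hypothesis $\sin(\kappa a)\neq0$ is precisely the generic condition $C_0\neq0$;
\item it identifies $C_{\rm thr}=-\ell_{\rm thr}=(A_0+D_0)/(2C_0)$ as pure zero-energy data;
\item it obtains the $O(\sqrt E)$ remainder from the even-in-$k$ structure of $\mathsf T(k)$, without ever needing the finite-energy formula \eqref{eq:tauP-exact}.
\end{itemize}
Its price is that it rests entirely on the appendix derivation. The paper's direct expansion is self-contained once \eqref{eq:tauP-exact} is established, so it also serves as an independent consistency check on the general theorem.

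One small fix in your cross-check. State the result additively, $\tau_P=-\cot(\kappa a)/(\kappa k)+O(k)$, not in the relative form $-\cot(\kappa a)/(\kappa k)\,[1+O(k^2)]$. The relative form fails when $\cos(\kappa a)=0$: there the leading coefficient vanishes, but one finds $\tau_P\approx 3ak/(2\kappa^2)\neq0$. Your numerator and denominator estimates already give the additive form directly.
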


\begin{proof}
Expanding Eq.~\eqref{eq:tauP-exact} at small $k=\sqrt{E}$ using
\begin{equation}
q=\sqrt{\kappa^2+k^2}=\kappa+\frac{k^2}{2\kappa}+O(k^4)
\end{equation}
gives Eq.~\eqref{eq:tauP-threshold}.
\end{proof}

The leading singular term in Eq.~\eqref{eq:tauP-threshold} is the threshold obstruction in the raw stationary clock. This motivates the threshold-subtracted clock time
\begin{equation}
\tau_{\rm sub}(E):=
\tau_P(E)+\frac{\cot(\kappa a)}{\kappa\sqrt{E}},
\qquad \sin(\kappa a)\neq 0.
\label{eq:tausub-def}
\end{equation}

\begin{corollary}[Finite threshold-subtracted clock time]
For $\sin(\kappa a)\neq 0$,
\begin{equation}
\tau_{\rm sub}(E)=O(\sqrt{E}),
\qquad E\to0^+,
\end{equation}
and in particular
\begin{equation}
\tau_{\rm sub}(E)\to 0
\qquad (E\to0^+).
\label{eq:tausub-vanish}
\end{equation}
\end{corollary}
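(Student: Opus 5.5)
The claim follows almost immediately from the preceding Proposition (Generic threshold law), Eq.~\eqref{eq:tauP-threshold}. By definition~\eqref{eq:tausub-def}, $\tau_{\rm sub}(E)$ is $\tau_P(E)$ with the term $\cot(\kappa a)/(\kappa\sqrt E)$ added back. That term is exactly the negative of the leading singular term in~\eqref{eq:tauP-threshold}. The subtraction therefore cancels it identically, leaving only the remainder, which is $O(\sqrt E)$. Since $\sqrt E\to0$ as $E\to0^+$, the limit~\eqref{eq:tausub-vanish} follows at once.

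The only real work is checking that the remainder in~\eqref{eq:tauP-threshold} is genuinely $O(\sqrt E)$ and not $O(1)$; the corollary hinges on there being no constant term. I would argue this in one of two ways.

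\emph{Parity argument.} Write $\tau_P$ from Eq.~\eqref{eq:tauP-exact} as $N(k)/D(k)$ with $k=\sqrt E$. Since $q=\sqrt{\kappa^2+k^2}$ is even in $k$, the numerator $ak^2q(k^2+q^2)-\tfrac12(q^2-k^2)^2\sin(2qa)$ is an even analytic function of $k$. The denominator is $k$ times an even analytic function $G(k)=q\big(4k^2q^2+(q^2-k^2)^2\sin^2(qa)\big)$. We have $G(0)=\kappa^5\sin^2(\kappa a)\neq0$ under the hypothesis $\sin(\kappa a)\neq0$. Hence $k\,\tau_P$ is even and analytic near $k=0$, so $\tau_P=c_{-1}/k+c_1k+O(k^3)$ with no $k^0$ term.

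The leading coefficient is
\[
c_{-1}=\frac{-\tfrac12\kappa^4\sin(2\kappa a)}{\kappa^5\sin^2(\kappa a)}=-\frac{\cot(\kappa a)}{\kappa},
\]
which matches~\eqref{eq:Cthr-def}. Consequently $\tau_{\rm sub}=c_1\sqrt E+O(E^{3/2})$.

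\emph{Alternative route.} One can instead invoke Corollary~\ref{cor:general-threshold-body} directly. This requires identifying $\ell_{\rm thr}=\cot(\kappa a)/\kappa$ for the square well. The zero-energy transfer matrix has $A_0=D_0=\cos(\kappa a)$ and $C_0=-\kappa\sin(\kappa a)$, which gives $\ell_{\rm thr}=-(A_0+D_0)/(2C_0)=\cot(\kappa a)/\kappa$. Then~\eqref{eq:tausub-def} coincides with~\eqref{eq:tausub-body}.

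I expect the parity observation to be the only step needing care. It is what excludes an $O(1)$ term, and the general corollary supplies the same conclusion independently.
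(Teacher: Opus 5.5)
Your proposal is correct and takes essentially the paper's route. The paper gives no separate proof: the corollary is read off from the generic threshold law Eq.~\eqref{eq:tauP-threshold}, whose $1/\sqrt{E}$ term the subtraction in Eq.~\eqref{eq:tausub-def} cancels exactly; equivalently it follows from Corollary~\ref{cor:general-threshold-body} with $\ell_{\rm thr}=\cot(\kappa a)/\kappa$. Your parity argument, that $k\,\tau_P$ is even and analytic near $k=0$ because $G(0)=\kappa^5\sin^2(\kappa a)\neq0$, explicitly justifies why no $O(1)$ remainder survives, a step the paper compresses into ``expanding gives.''
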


The exceptional threshold-resonant / half-bound tuning is handled separately.

\begin{proposition}[Threshold-resonant case]
If $\sin(\kappa a)=0$, then the generic subtraction \eqref{eq:tausub-def} is not the correct one. Instead,
\begin{equation}
\tau_P(E)=\frac{a}{8\sqrt{E}}+O(\sqrt{E})
\qquad (E\to 0^+).
\label{eq:tauP-halfbound}
\end{equation}
\end{proposition}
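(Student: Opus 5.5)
The plan is to bypass the exact formula \eqref{eq:tauP-exact} at first and work directly with the representation used in the proof of the exact clock time, $t(E)e^{ika}=1/(A-iB)$ with $A=\cos(qa)$ and $B=\frac{k^2+q^2}{2kq}\sin(qa)$. The phase is then $\delta_P=\arctan(B/A)$ on the branch continuous near threshold, valid wherever $A\neq0$. I would carry out three steps in order: a parity observation, a leading-order expansion, and a differentiation with a check on the remainder.

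First, the parity observation. Since $q=\sqrt{\kappa^2+k^2}$ is even in $k$, $A$ is an even analytic function of $k$ near $0$. The factor $\sin(qa)$ is also even. The prefactor $(k^2+q^2)/(2kq)$ is odd, so $B$ is odd in $k$. Under the tuning $\kappa a=n\pi$ with $n\ge1$ we have $A(0)=(-1)^n\neq0$. Hence $B/A$ is real-analytic and odd in $k$ near $k=0$, and so is $\arctan(B/A)$. This gives the structure
\[
\delta_P=c_1k+c_3k^3+O(k^5).
\]

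Second, the coefficient $c_1$. I would expand $q=\kappa+k^2/(2\kappa)+O(k^4)$, so that
\[
qa=n\pi+\frac{a k^2}{2\kappa}+O(k^4),\qquad
\sin(qa)=(-1)^n\frac{a k^2}{2\kappa}+O(k^4),\qquad
\cos(qa)=(-1)^n+O(k^4).
\]
The prefactor is $\frac{k^2+q^2}{2kq}=\frac{\kappa}{2k}+O(k)$. Therefore $B=(-1)^n\,\frac{ak}{4}+O(k^3)$, and
\[
B/A=\frac{ak}{4}+O(k^3).
\]
The sign $(-1)^n$ cancels, which is consistent with $t_0=\pm1$ in Theorem~\ref{thm:general-threshold-body}(ii): the phase offset is $0$ mod $\pi$. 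This gives $c_1=a/4$.

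Third, differentiation. Using $d/dE=(2k)^{-1}\,d/dk$, we get
\[
\tau_P=\frac{c_1}{2k}+\frac{3c_3k}{2}+O(k^3)=\frac{a}{8\sqrt E}+O(\sqrt E),
\]
which is \eqref{eq:tauP-halfbound}.

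The only delicate point is justifying that the remainder after differentiation is genuinely $O(\sqrt E)$. The main obstacle here is ensuring no even-in-$k$ term (for example a constant in $\tau_P$) sneaks in. The parity argument handles this cleanly, since differentiating an odd analytic series in $k$ and dividing by $2k$ leaves only even powers beyond the $k^{-1}$ term. As a cross-check, I would compare with Theorem~\ref{thm:general-threshold-body}(ii). For the square well one has $A_0=D_0=\cos(\kappa a)=(-1)^n$, $C_0=-\kappa\sin(\kappa a)=0$, and $B_0=\sin(\kappa a)/\kappa=0$. Computing $C_2$ by differentiating $-q\sin(qa)$ in $k^2$ gives $C_2=-(-1)^n a/2$. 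Then $\tilde\ell_{\rm thr}=-\frac{B_0-C_2}{2(A_0+D_0)}=-a/8$, so $-\tilde\ell_{\rm thr}/\sqrt E=a/(8\sqrt E)$, in agreement. Finally, since $a/8\neq-\cot(\kappa a)/\kappa$ (the latter is undefined here), the generic subtraction \eqref{eq:tausub-def} is inapplicable, which settles the first sentence of the proposition.
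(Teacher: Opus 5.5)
Your proof is correct, but it follows a different route from the paper's. The paper works with the clock time itself. It substitutes $q^2-k^2=\kappa^2$ (exact), $q=\kappa+O(k^2)$, $\sin(2qa)=ak^2/\kappa+O(k^4)$ and $\sin^2(qa)=O(k^4)$ into the closed form \eqref{eq:tauP-exact}. It finds numerator $\tfrac12 a\kappa^3k^2+O(k^4)$ and denominator $4\kappa^3k^3+O(k^5)$, and reads off $a/(8k)+O(k)$. Because no phase is differentiated, it needs no branch bookkeeping and no argument that an asymptotic remainder may be differentiated; what it does need is the exact formula \eqref{eq:tauP-exact}.

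You bypass that formula and expand the phase $\arctan(B/A)$ directly, using oddness and analyticity in $k$ to justify term-by-term differentiation. This explains structurally why $\tau_P$ has no $O(1)$ term. It also makes explicit the remainder-differentiation step that the appendix's general derivation leaves implicit. Your cross-check ($B_0=0$, $C_2=-(-1)^n a/2$, hence $\tilde\ell_{\rm thr}=-a/8$ and $t_0=(-1)^n$) shows the result is an instance of Theorem~\ref{thm:general-threshold-body}(ii).

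One ordering point to tidy up: $B$ is analytic at $k=0$ only because the tuning forces $\sin(qa)=O(k^2)$, which cancels the $1/k$ in the prefactor $(k^2+q^2)/(2kq)$. You establish this in your second step but assert the analyticity of $B/A$ already in the first.
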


\begin{proof}
At the threshold-resonant tuning $\sin(\kappa a)=0$, one has $\kappa a=n\pi$ for some integer $n$. Write $k=\sqrt{E}$ and
\[
q=\sqrt{\kappa^2+k^2}
=\kappa+\frac{k^2}{2\kappa}+O(k^4).
\]
Then
\[
qa=n\pi+\frac{a k^2}{2\kappa}+O(k^4),
\]
so
\[
\sin(qa)=(-1)^n\frac{a k^2}{2\kappa}+O(k^4),
\qquad
\sin(2qa)=\frac{a k^2}{\kappa}+O(k^4).
\]
Also,
\[
q^2-k^2=\kappa^2
\]
exactly, and
\[
q=\kappa+O(k^2).
\]
Substituting into Eq.~\eqref{eq:tauP-exact}, the numerator becomes
\[
a k^2 q(k^2+q^2)-\frac12(q^2-k^2)^2\sin(2qa)
=
\frac12 a\kappa^3 k^2+O(k^4),
\]
while the denominator is
\[
kq\Bigl(4k^2q^2+(q^2-k^2)^2\sin^2(qa)\Bigr)
=
4\kappa^3 k^3+O(k^5).
\]
Therefore
\[
\tau_P(E)=\frac{a}{8k}+O(k)
=\frac{a}{8\sqrt{E}}+O(\sqrt{E}),
\]
which proves Eq.~\eqref{eq:tauP-halfbound}.
\end{proof}

A useful amplitude-level perspective is obtained by expanding the transmission factor itself. For $\sin(\kappa a)\neq 0$,
\begin{equation}
t(E)e^{ika}
=
\frac{2ik}{\kappa\sin(\kappa a)}
\left[
1-2i\frac{\cot(\kappa a)}{\kappa}k+O(k^2)
\right],
\label{eq:amp-threshold-expansion}
\end{equation}
so the divergent phase derivative is directly tied to the low-energy threshold factor.

\section{Comparison with dwell time and Wigner phase delay}
\label{sec:comparison}

The interior wavefunction can be written as
\begin{equation}
\psi_{II}(x)=A\,e^{iqx}+B\,e^{-iqx},
\label{eq:psiII-def}
\end{equation}
and the dwell time is defined by~\cite{Smith1960}
\begin{equation}
\tau_D(E):=\frac{1}{j_{\rm in}}\int_0^a \dd x\,|\psi_{II}(x)|^2,
\qquad j_{\rm in}=2k.
\label{eq:tauD-def}
\end{equation}
In the units of Eq.~\eqref{eq:Sch-eq}, the probability current is
$j=2\,\mathrm{Im}(\psi^*\psi')$, so a unit-amplitude incoming plane wave has
$j_{\rm in}=2k$. A direct calculation gives
\begin{equation}
\tau_D(E)=
\frac{
 k\left[2a(k^2+q^2)+\dfrac{q^2-k^2}{q}\sin(2qa)\right]
}{
 2\Big(4k^2q^2+(q^2-k^2)^2\sin^2(qa)\Big)
}.
\label{eq:tauD-exact}
\end{equation}

\begin{proposition}[Threshold behavior of the dwell time]
For $\sin(\kappa a)\neq 0$,
\begin{equation}
\tau_D(E)=
\sqrt{E}\,
\frac{2a\kappa+\sin(2\kappa a)}
{2\kappa^3\sin^2(\kappa a)}
+O(E^{3/2}),
\qquad E\to0^+.
\label{eq:tauD-threshold}
\end{equation}
Hence the dwell time is threshold-benign: it vanishes as $\sqrt{E}$. At the threshold-resonant tuning $\sin(\kappa a)=0$,
\begin{equation}
\tau_D(E)=\frac{a}{4\sqrt{E}}+O(\sqrt{E}).
\label{eq:tauD-halfbound}
\end{equation}
\end{proposition}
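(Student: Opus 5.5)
Taking the exact dwell-time formula \eqref{eq:tauD-exact} as given, the proof is a direct small-$k$ expansion with $k=\sqrt E$, in the same way as the proofs of the generic threshold law and the threshold-resonant case for $\tau_P$. Throughout I use
\[
q=\kappa+\frac{k^2}{2\kappa}+O(k^4),
\qquad
q^2-k^2=\kappa^2 \ \text{(exactly)},
\qquad
k^2+q^2=\kappa^2+2k^2 .
\]
Since \eqref{eq:tauD-exact} is $k$ times a function of $k^2$, the dwell time is odd in $k$. So once the leading term is found, the error is automatically one power of $E$ higher, namely $O(E^{3/2})$ in the generic case and $O(\sqrt E)$ in the half-bound case.

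\emph{Generic sector $\sin(\kappa a)\neq0$.} The bracket in the numerator has the finite limit
\[
2a\kappa^2+\kappa\sin(2\kappa a)+O(k^2),
\]
so the numerator is $k\,\kappa\bigl(2a\kappa+\sin(2\kappa a)\bigr)+O(k^3)$. In the denominator, $4k^2q^2=O(k^2)$ while $(q^2-k^2)^2\sin^2(qa)=\kappa^4\sin^2(\kappa a)+O(k^2)$, which is nonzero. The denominator is therefore $2\kappa^4\sin^2(\kappa a)+O(k^2)$. Dividing gives
\[
\tau_D=k\,\frac{2a\kappa+\sin(2\kappa a)}{2\kappa^3\sin^2(\kappa a)}+O(k^3),
\]
which is \eqref{eq:tauD-threshold}. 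The ratio of two even power series with nonvanishing denominator constant term is again even, so the $O(k^3)$ remainder is legitimate.

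\emph{Half-bound sector $\kappa a=n\pi$.} I reuse the expansions from the proof of \eqref{eq:tauP-halfbound}:
\[
\sin(qa)=(-1)^n\frac{ak^2}{2\kappa}+O(k^4),
\qquad
\sin(2qa)=\frac{ak^2}{\kappa}+O(k^4).
\]
The numerator bracket is then $2a\kappa^2+O(k^2)$, so the numerator is $2a\kappa^2k+O(k^3)$. In the denominator the two terms now compete at the same order:
\[
4k^2q^2=4\kappa^2k^2+O(k^4),
\qquad
\kappa^4\sin^2(qa)=O(k^4).
\]
Hence the denominator is $8\kappa^2k^2+O(k^4)$, and
\[
\tau_D=\frac{2a\kappa^2k}{8\kappa^2k^2}\bigl(1+O(k^2)\bigr)=\frac{a}{4k}+O(k),
\]
which is \eqref{eq:tauD-halfbound}.

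The only real care point is the half-bound denominator. There the usually dominant $\sin^2$ term drops to $O(k^4)$, so I must check it is subleading rather than comparable to $4k^2q^2$; the explicit expansion of $\sin(qa)$ settles this. Everything else is routine bookkeeping. As a sanity check, the half-bound result is twice the clock coefficient $a/8$ in \eqref{eq:tauP-halfbound}, which I would note but not rely on.
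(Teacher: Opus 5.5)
Your proposal is correct and follows the paper's proof essentially step for step. Both expand Eq.~\eqref{eq:tauD-exact} at small $k$. In the generic case this gives numerator $k\kappa(2a\kappa+\sin 2\kappa a)+O(k^3)$ over denominator $2\kappa^4\sin^2(\kappa a)+O(k^2)$. At $\kappa a=n\pi$ it gives numerator $2a\kappa^2k+O(k^3)$ over denominator $8\kappa^2k^2+O(k^4)$. Your parity remark is a harmless extra justification of the remainder orders. One wording point: in the half-bound denominator the $\sin^2$ term does not ``compete at the same order'' with $4k^2q^2$; it is strictly subleading, as you correctly note at the end.
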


\begin{proof}
Using Eq.~\eqref{eq:tauD-exact} with $k=\sqrt{E}$ and $q=\sqrt{\kappa^2+k^2}$, first consider the generic case $\sin(\kappa a)\neq0$. As $k\to0$,
\[
q=\kappa+\frac{k^2}{2\kappa}+O(k^4),
\qquad
q^2-k^2=\kappa^2,
\qquad
q^2+k^2=\kappa^2+2k^2,
\]
and
\[
\sin(2qa)=\sin(2\kappa a)+O(k^2),
\qquad
\sin^2(qa)=\sin^2(\kappa a)+O(k^2).
\]
Therefore the numerator becomes
\[
k\left[2a(k^2+q^2)+\frac{q^2-k^2}{q}\sin(2qa)\right]
=
k\left[2a\kappa^2+\kappa\sin(2\kappa a)\right]
+O(k^3),
\]
while the denominator is
\[
2\Big(4k^2q^2+(q^2-k^2)^2\sin^2(qa)\Big)
=
2\kappa^4\sin^2(\kappa a)+O(k^2).
\]
Hence
\[
\tau_D(E)=
k\,\frac{2a\kappa^2+\kappa\sin(2\kappa a)}
{2\kappa^4\sin^2(\kappa a)}
+O(k^3)
=
\sqrt{E}\,
\frac{2a\kappa+\sin(2\kappa a)}
{2\kappa^3\sin^2(\kappa a)}
+O(E^{3/2}),
\]
which is Eq.~\eqref{eq:tauD-threshold}.

Now consider the threshold-resonant / half-bound case $\sin(\kappa a)=0$, so $\kappa a=n\pi$ for some integer $n$. Then
\[
q=\kappa+\frac{k^2}{2\kappa}+O(k^4),
\qquad
qa=n\pi+\frac{a k^2}{2\kappa}+O(k^4),
\]
and therefore
\[
\sin(2qa)=\frac{a k^2}{\kappa}+O(k^4),
\qquad
\sin^2(qa)=\frac{a^2 k^4}{4\kappa^2}+O(k^6).
\]
Substituting into Eq.~\eqref{eq:tauD-exact}, the numerator becomes
\[
k\left[2a(k^2+q^2)+\frac{q^2-k^2}{q}\sin(2qa)\right]
=
2a\kappa^2 k+O(k^3),
\]
while the denominator is
\[
2\Big(4k^2q^2+(q^2-k^2)^2\sin^2(qa)\Big)
=
8\kappa^2 k^2+O(k^4).
\]
Hence
\[
\tau_D(E)=\frac{a}{4k}+O(k)
=
\frac{a}{4\sqrt{E}}+O(\sqrt{E}),
\]
which is Eq.~\eqref{eq:tauD-halfbound}.
\end{proof}

Thus the dwell time and the raw stationary clock behave very differently at threshold: the former generically vanishes as $\sqrt{E}$, while the latter diverges as $1/\sqrt{E}$.

For comparison, define the transmission Wigner phase delay by~\cite{Wigner1955,Smith1960}
\begin{equation}
\tau_W(E):=\dv{}{E}\arg t(E).
\label{eq:tauW-def}
\end{equation}
Since
\begin{equation}
\arg\!\big[t(E)e^{ika}\big]=\arg t(E)+ka,
\end{equation}
we have
\begin{equation}
\tau_W(E)=\tau_P(E)-\frac{a}{2\sqrt{E}}.
\label{eq:tauW-tauP}
\end{equation}

Thus the transmission Wigner phase delay also inherits a threshold divergence, reinforcing that the low-energy singularity is not unique to the raw stationary clock.

\section{Transmission resonances and threshold masking}
\label{sec:resonances}

The transmission probability is
\begin{equation}
T(E)=|t(E)|^2=
\frac{1}{1+\dfrac{V_0^2}{4E(E+V_0)}\sin^2(qa)}.
\label{eq:T-exact}
\end{equation}
Its maxima occur at
\begin{equation}
qa=n\pi,
\qquad
E_n=\frac{n^2\pi^2}{a^2}-V_0,
\qquad E_n>0.
\label{eq:En-def}
\end{equation}
We refer to $E_n$ as the transmission-resonance energies of the square well.

\subsection{Local resonant form of the subtracted clock}

Write
\begin{equation}
E=E_n+\Delta E,
\qquad
q_n:=\sqrt{E_n+V_0}=\frac{n\pi}{a},
\qquad
k_n:=\sqrt{E_n}.
\end{equation}
Near $E_n$,
\begin{equation}
qa=n\pi+\frac{a}{2q_n}\Delta E+O(\Delta E^2),
\end{equation}
so
\begin{equation}
\sin(qa)\approx (-1)^n\frac{a}{2q_n}\Delta E,
\qquad
\cos(qa)\approx (-1)^n.
\end{equation}
Expanding the exact amplitude near $E_n$ gives the local arctangent phase profile and hence the following result.

\begin{proposition}[Local resonant form of the subtracted clock time]
Assume \(\sin(\kappa a)\neq0\), so that the generic subtraction
\eqref{eq:tausub-def} is defined. Let \(E_n>0\) satisfy
Eq.~\eqref{eq:En-def}. Then near \(E=E_n\),
\begin{equation}
\tau_{\rm sub}(E)\approx
\frac{H_n}{1+H_n^2(E-E_n)^2}
+\text{smooth background},
\label{eq:tausub-Lorentz}
\end{equation}
where
\begin{equation}
H_n=\frac{a(2E_n+V_0)}{4\sqrt{E_n}(E_n+V_0)}.
\label{eq:Hn-def}
\end{equation}
Equivalently,
\begin{equation}
\tau_{\rm sub}(E)\approx
\frac{\Gamma_n/2}{(E-E_n)^2+(\Gamma_n/2)^2}
+\text{smooth background},
\qquad
\Gamma_n=\frac{8\sqrt{E_n}(E_n+V_0)}{a(2E_n+V_0)}.
\label{eq:Gamma-def}
\end{equation}
\end{proposition}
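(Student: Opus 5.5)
The plan is to work with the phase of \(t(E)e^{ika}=1/(A-iB)\), in the notation of the proof of Eq.~\eqref{eq:tauP-exact}. We have \(\delta_P=\arg[1/(A-iB)]=\arctan(B/A)\), valid on a branch where \(A\neq0\). Near \(E_n\) we have \(A=\cos(qa)\approx(-1)^n\neq0\), so this branch is legitimate in a neighbourhood of the resonance, and
\[
\tau_P(E)=\frac{\mathrm d}{\mathrm dE}\arctan\!\Big(\frac{B}{A}\Big).
\]
The idea is to show that \(B/A\) is, to leading order, a linear function \(H_n(E-E_n)\) with a smooth remainder. The derivative of \(\arctan[H_n(E-E_n)]\) is then exactly the Lorentzian \(H_n/[1+H_n^2(E-E_n)^2]\), and everything else is absorbed into the smooth background.

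\textbf{Step 1: compute the slope.} Using the local expansions already displayed, \(\sin(qa)\approx(-1)^n\frac{a}{2q_n}\Delta E\) and \(\cos(qa)\approx(-1)^n\), the signs \((-1)^n\) cancel in the ratio. This gives
\[
\frac{B}{A}=\frac{k_n^2+q_n^2}{2k_nq_n}\cdot\frac{a}{2q_n}\,\Delta E+O(\Delta E^2).
\]
Substituting \(k_n^2+q_n^2=2E_n+V_0\) and \(q_n^2=E_n+V_0\) yields
\[
H_n=\frac{a(2E_n+V_0)}{4\sqrt{E_n}(E_n+V_0)},
\]
which is Eq.~\eqref{eq:Hn-def}. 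Since \(E_n>0\), \(k_n\neq0\), so \(B/A\) is real-analytic near \(E_n\). One could equally obtain this by evaluating \(\tau_P(E_n)\) directly from Eq.~\eqref{eq:tauP-exact}: at \(E_n\), \(\sin(2q_na)=0\) and \(\sin(q_na)=0\), which gives \(\tau_P(E_n)=a(k_n^2+q_n^2)/(4k_n^2q_n^2)\cdot k_n q_n/(k_nq_n)\). I would use this as a cross-check on \(H_n\).

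\textbf{Step 2: isolate the Lorentzian.} Write \(B/A=H_n\Delta E+R(\Delta E)\) with \(R=O(\Delta E^2)\) analytic. Then
\[
\frac{\mathrm d}{\mathrm dE}\arctan(B/A)-\frac{H_n}{1+H_n^2\Delta E^2}
\]
is a difference of derivatives of two analytic functions near \(E_n\). It is therefore smooth there and vanishes at \(\Delta E=0\), because both terms equal \(H_n\) at \(E_n\) (the \(O(\Delta E)\) terms in \((B/A)'\) and the denominator cancel at that point). The subtraction term \(\cot(\kappa a)/(\kappa\sqrt E)\) is smooth near \(E_n>0\); it requires only \(\sin(\kappa a)\neq0\), which is assumed. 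So it also goes into the smooth background.

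\textbf{Step 3: Breit--Wigner form and main obstacle.} The form in Eq.~\eqref{eq:Gamma-def} is pure algebra: setting \(\Gamma_n/2=1/H_n\) gives \(\Gamma_n=2/H_n\), which is exactly the stated expression. The main obstacle is making precise the meaning of ``smooth background'' when the resonance is not narrow. The remainder is smooth, but it is comparable to the peak unless \(H_n\) is large relative to the curvature scale of \(R\). I would state the result as an exact decomposition with an analytic remainder near \(E_n\), and note that the Lorentzian dominates when \(\Gamma_n\) is small compared with the level spacing, i.e.\ for deep or wide wells. I would not claim a uniform error bound.
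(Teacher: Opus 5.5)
Your proposal is correct and follows essentially the same route as the paper: write $\delta_P$ as the phase of $1/(A-iB)$ and expand $B/A$ (equivalently $A$ and $B$) to first order about $E_n$. This gives $\arctan(H_n\Delta E)$ plus a smooth remainder; you then note that $\cot(\kappa a)/(\kappa\sqrt{E})$ is smooth at $E_n>0$ and set $\Gamma_n=2/H_n$. One small slip is in your cross-check: Eq.~\eqref{eq:tauP-exact} actually gives $\tau_P(E_n)=a(k_n^2+q_n^2)/(4k_nq_n^2)$, which does equal $H_n$, whereas the expression you wrote simplifies to this divided by $k_n$.
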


\begin{proof}
Assume \(\sin(\kappa a)\neq 0\), so that \(\tau_{\rm sub}\) is defined by
Eq.~\eqref{eq:tausub-def}. Let \(E_n>0\) satisfy \(q_n a=n\pi\), with
\[
q_n=\sqrt{E_n+V_0}=\frac{n\pi}{a},
\qquad
k_n=\sqrt{E_n}.
\]
Write
\[
E=E_n+\Delta E.
\]
Since \(q(E)=\sqrt{E+V_0}\), one has
\[
q=q_n+\frac{\Delta E}{2q_n}+O(\Delta E^2),
\qquad
qa=n\pi+\frac{a}{2q_n}\Delta E+O(\Delta E^2).
\]
Therefore
\[
\sin(qa)=(-1)^n\frac{a}{2q_n}\Delta E+O(\Delta E^2),
\qquad
\cos(qa)=(-1)^n+O(\Delta E^2).
\]

Using
\[
t(E)e^{ika}=\frac{1}{A-iB},
\qquad
A=\cos(qa),
\qquad
B=\frac{k^2+q^2}{2kq}\sin(qa),
\]
and noting that the prefactor in \(B\) is smooth near \(E_n\), we may evaluate it to first order at \(E_n\):
\[
\frac{k^2+q^2}{2kq}
=
\frac{2E_n+V_0}{2k_n q_n}+O(\Delta E).
\]
Hence
\[
B=
(-1)^n\frac{2E_n+V_0}{2k_n q_n}\frac{a}{2q_n}\Delta E
+O(\Delta E^2)
=
(-1)^n H_n\,\Delta E+O(\Delta E^2),
\]
where
\[
H_n=\frac{a(2E_n+V_0)}{4\sqrt{E_n}(E_n+V_0)}.
\]
Since also
\[
A=(-1)^n+O(\Delta E^2),
\]
it follows that
\[
A-iB
=
(-1)^n\Bigl(1-iH_n\Delta E+O(\Delta E^2)\Bigr).
\]
Therefore, on a fixed smooth phase branch,
\[
\delta_P(E)
=
\arg\!\bigl[t(E)e^{ika}\bigr]
=
\delta_n+\arctan\!\bigl(H_n\Delta E\bigr)+O(\Delta E^2),
\]
for some constant phase \(\delta_n\). Differentiating with respect to \(E\)
gives
\[
\tau_P(E)
=
\frac{H_n}{1+H_n^2(E-E_n)^2}
+\text{smooth background}.
\]

Since the subtraction term
\[
\frac{\cot(\kappa a)}{\kappa\sqrt{E}}
\]
is smooth near every fixed resonance energy \(E_n>0\), passing from
\(\tau_P\) to \(\tau_{\rm sub}\) only changes the smooth background term.
Thus
\[
\tau_{\rm sub}(E)\approx
\frac{H_n}{1+H_n^2(E-E_n)^2}
+\text{smooth background}.
\]

Finally, defining
\[
\Gamma_n:=\frac{2}{H_n}
=
\frac{8\sqrt{E_n}(E_n+V_0)}{a(2E_n+V_0)},
\]
one rewrites the Lorentzian as
\[
\frac{H_n}{1+H_n^2(E-E_n)^2}
=
\frac{\Gamma_n/2}{(E-E_n)^2+(\Gamma_n/2)^2},
\]
which proves Eqs.~\eqref{eq:tausub-Lorentz} and \eqref{eq:Gamma-def}.
\end{proof}

\subsection{Dwell-time resonance comparison}

Using Eq.~\eqref{eq:tauD-exact} and freezing the slowly varying numerator at $E_n$ while keeping the quadratic resonant dip in the denominator gives
\begin{equation}
\tau_D(E)\approx
\frac{\tau_{D,n}^{\max}}
{1+\left(\dfrac{E-E_n}{\Delta_n/2}\right)^2}
+\text{smooth background},
\label{eq:tauD-Lorentz}
\end{equation}
with peak height
\begin{equation}
\tau_{D,n}^{\max}=
\frac{a(2E_n+V_0)}{4\sqrt{E_n}(E_n+V_0)},
\label{eq:tauD-peak}
\end{equation}
and width scale
\begin{equation}
\Delta_n=
\frac{8\sqrt{E_n}(E_n+V_0)}{aV_0}.
\label{eq:Delta-def}
\end{equation}
In particular,
\begin{equation}
\tau_{D,n}^{\max}=\tau_{{\rm sub},n}^{\max},
\qquad
\frac{\Delta_n}{\Gamma_n}=\frac{2E_n+V_0}{V_0}.
\label{eq:peak-relations}
\end{equation}
For a resonance approaching threshold, $E_n\ll V_0$, the widths coincide asymptotically and both peak heights scale as $E_n^{-1/2}$.

This comparison clarifies the role of the subtraction: the threshold-subtracted stationary clock does not coincide globally with the dwell time, but near an isolated square-well resonance it exposes the same leading resonant peak height and the same near-threshold scaling, up to smooth background terms.

\subsection{Quantitative threshold masking}

To exhibit the masking mechanism explicitly, write
\begin{equation}
\kappa a=n\pi-\varepsilon,
\qquad
0<\varepsilon\ll1.
\label{eq:epsilon-def}
\end{equation}
Then the corresponding resonance lies at
\begin{equation}
E_n\approx \frac{2\kappa\varepsilon}{a}.
\label{eq:En-nearthresh}
\end{equation}
At the resonance energy,
\begin{equation}
\tau_{{\rm sub},n}^{\max}\sim \tau_{D,n}^{\max}\sim \varepsilon^{-1/2},
\label{eq:peak-scaling}
\end{equation}
whereas the raw threshold background scales as
\begin{equation}
|\tau_{\rm thr}(E_n)|\sim \varepsilon^{-3/2}.
\label{eq:thr-scaling}
\end{equation}
Hence
\begin{equation}
\frac{\tau_{{\rm sub},n}^{\max}}{|\tau_{\rm thr}(E_n)|}
\sim
\frac{\tau_{D,n}^{\max}}{|\tau_{\rm thr}(E_n)|}
\sim \varepsilon,
\qquad
\varepsilon\to0^+.
\label{eq:masking}
\end{equation}
Equation~\eqref{eq:masking} is the quantitative statement that the unsubtracted stationary clock becomes progressively worse at resolving a resonance as that resonance approaches threshold, even though the resonant peak itself is growing.

\section{Control example: local resonant structure in a symmetric barrier--well--barrier cavity}
\label{sec:bwb}

As a complementary control example, we consider a symmetric
barrier--well--barrier cavity~\cite{Hartman1962,Calcada2009,Lunardi2011}.
This geometry is not needed for the threshold theorem, but it connects naturally
to the tunneling-time and resonant-tunneling setting and makes the local
Breit--Wigner structure especially transparent once the smooth threshold
background is separated off.

Consider
\begin{equation}
V(x)=
\begin{cases}
U, & 0<x<b,\\
-\,V_0, & b<x<b+a,\\
U, & b+a<x<2b+a,\\
0, & x<0 \ \text{or}\ x>2b+a,
\end{cases}
\qquad
U>0,
\quad
V_0>0,
\label{eq:bwb-potential}
\end{equation}
and restrict for definiteness to
\begin{equation}
0<E<U,
\qquad
k=\sqrt{E},
\qquad
q=\sqrt{E+V_0},
\qquad
\kappa_B=\sqrt{U-E}.
\label{eq:bwb-scales}
\end{equation}

\subsection{Exact cavity composition from the transfer matrix}

To avoid any ambiguity associated with mismatched asymptotic wavenumbers, we formulate the cavity exactly in terms of the scattering amplitudes of a \emph{single} barrier with distinct left- and right-incidence amplitudes. Let the left barrier occupy $0<x<b$, separating the exterior medium of wavenumber $k$ from the cavity medium of wavenumber $q$. Denote the corresponding amplitudes by
\begin{equation}
r_L(E),\qquad t_L(E),\qquad r_R(E),\qquad t_R(E),
\label{eq:singlebarrier-amps}
\end{equation}
where $r_L$ and $t_L$ refer to incidence from the exterior side, while $r_R$ and $t_R$ refer to incidence from the cavity side. In terms of incoming/outgoing amplitudes,
\begin{equation}
b_L=r_L a_L+t_R c_-,
\qquad
c_+=t_L a_L+r_R c_-.
\label{eq:leftbarrier-scatt}
\end{equation}
Equivalently, the exact transfer matrix of the left barrier is
\begin{equation}
\binom{c_+}{c_-}
=
M_L(E)
\binom{a_L}{b_L},
\qquad
M_L(E)=\frac{1}{t_R(E)}
\begin{pmatrix}
t_L t_R-r_L r_R & r_R \\
-r_L & 1
\end{pmatrix}.
\label{eq:ML-def}
\end{equation}

The right barrier is the mirror image of the left one. Writing $d_+$ and $d_-$ for the right- and left-moving amplitudes at the left edge of the right barrier and $(b_R,a_R)$ for the outgoing/incoming amplitudes in the exterior region on the far right, its exact transfer matrix is
\begin{equation}
\binom{b_R}{a_R}
=
M_R(E)
\binom{d_+}{d_-},
\qquad
M_R(E)=\frac{1}{t_L(E)}
\begin{pmatrix}
t_L t_R-r_L r_R & r_L \\
-r_R & 1
\end{pmatrix}.
\label{eq:MR-def}
\end{equation}
Propagation across the central well of width $a$ is described by
\begin{equation}
\binom{d_+}{d_-}
=
P_q(a)
\binom{c_+}{c_-},
\qquad
P_q(a)=
\begin{pmatrix}
e^{iqa} & 0 \\
0 & e^{-iqa}
\end{pmatrix}.
\label{eq:Pq-def}
\end{equation}
Thus the full cavity transfer matrix is exactly
\begin{equation}
\binom{b_R}{a_R}=M_R(E)P_q(a)M_L(E)\binom{a_L}{b_L}.
\label{eq:full-bwb-transfer}
\end{equation}

For left incidence on the full cavity, one sets
\begin{equation}
a_L=1,
\qquad
a_R=0,
\qquad
b_R=t_{\rm bwb}(E),
\label{eq:left-inc-cavity}
\end{equation}
with $b_L$ the total reflection amplitude. Eliminating the internal amplitudes gives the exact composition law
\begin{equation}
t_{\rm bwb}(E)=
\frac{t_L(E)t_R(E)e^{iqa}}
{1-r_R(E)^2 e^{2iqa}}.
\label{eq:t-bwb-exact}
\end{equation}
This is the correct cavity formula for the symmetric barrier--well--barrier structure. The crucial point is that the numerator involves the distinct left- and right-incident single-barrier transmission amplitudes, while the denominator involves the cavity-side reflection amplitude $r_R$.

It is useful to parameterize
\begin{equation}
r_R(E)=\sqrt{\rho(E)}\,e^{i\phi_R(E)},
\qquad
0\le \rho(E)<1,
\label{eq:rR-polar}
\end{equation}
and
\begin{equation}
t_L(E)t_R(E)=\mathcal T(E)e^{i\chi(E)},
\label{eq:tLtR-polar}
\end{equation}
so that
\begin{equation}
t_{\rm bwb}(E)=
\frac{\mathcal T(E)e^{i(\chi+qa)}}{1-\rho(E)e^{i\Theta(E)}},
\qquad
\Theta(E):=2qa+2\phi_R(E).
\label{eq:t-bwb-Theta-correct}
\end{equation}

Define the stationary clock phase by restoring the free phase across the support $L:=2b+a$,
\begin{equation}
\delta_P^{\rm bwb}(E):=\arg\!\big[t_{\rm bwb}(E)e^{ikL}\big].
\label{eq:deltaP-bwb}
\end{equation}
Then
\begin{equation}
\delta_P^{\rm bwb}(E)=
\delta_{\rm sm}(E)
-\arg\!\big[1-\rho(E)e^{i\Theta(E)}\big],
\label{eq:delta-bwb-decomp-correct}
\end{equation}
where
\begin{equation}
\delta_{\rm sm}(E):=\chi(E)+q(E)a+k(E)L
\label{eq:delta-sm-bwb}
\end{equation}
is smooth on the scale of an isolated resonance.

\subsection{Local resonant form}

Transmission resonances occur at the cavity quantization condition
\begin{equation}
\Theta(E_n)=2\pi n,
\qquad n\in\mathbb Z.
\label{eq:bwb-res-cond-correct}
\end{equation}
Let
\begin{equation}
\rho_n:=\rho(E_n),
\qquad
\Theta_n':=\eval{\dv{\Theta}{E}}_{E_n}.
\label{eq:rho-theta-prime}
\end{equation}

Near \(E=E_n\), one has
\begin{equation}
1-\rho(E)e^{i\Theta(E)}
=
(1-\rho_n)-\rho'_n(E-E_n)
-i\rho_n\Theta'_n(E-E_n)
+O\!\left((E-E_n)^2\right).
\label{eq:bwb-den-exp-correct}
\end{equation}
Under the slow-variation assumption, the real linear term gives only a
subleading asymmetric correction to the local line shape and is neglected at
leading Breit--Wigner order.

\begin{proposition}[Local resonant form in the cavity geometry]
Let $E_n$ satisfy Eq.~\eqref{eq:bwb-res-cond-correct}, and assume the resonance is isolated and the barrier amplitudes vary slowly across the resonant window. Then
\begin{equation}
\tau_P^{\rm bwb}(E)
=
\tau_{\rm sm}^{\rm bwb}(E)
+
\operatorname{sgn}(\Theta_n')\,
\frac{\Gamma_n/2}{(E-E_n)^2+(\Gamma_n/2)^2}
+O(E-E_n),
\label{eq:tauP-bwb-BW-safe}
\end{equation}
where
\begin{equation}
\tau_{\rm sm}^{\rm bwb}(E):=\dv{\delta_{\rm sm}}{E}
\end{equation}
is a smooth background term and
\begin{equation}
\Gamma_n:=
\frac{2(1-\rho_n)}{\rho_n\,|\Theta_n'|}>0
\label{eq:Gamma-bwb-safe}
\end{equation}
is the local Breit--Wigner width parameter.
\end{proposition}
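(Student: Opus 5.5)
The argument runs through the decomposition \eqref{eq:delta-bwb-decomp-correct}. Differentiating it with respect to $E$ gives
\begin{equation*}
\tau_P^{\rm bwb}(E)=\tau_{\rm sm}^{\rm bwb}(E)-\dv{}{E}\arg\!\big[1-\rho(E)e^{i\Theta(E)}\big].
\end{equation*}
The whole task is therefore to extract the Breit--Wigner piece from the derivative of the resonant denominator's phase. Since $0\le\rho<1$, the quantity $1-\rho e^{i\Theta}$ has strictly positive real part. Its argument is then a smooth single-valued function (principal branch) in a neighbourhood of $E_n$, and it can be written as
\begin{equation*}
\arg\!\big[1-\rho e^{i\Theta}\big]=-\arctan\frac{\rho\sin\Theta}{1-\rho\cos\Theta}.
\end{equation*}
I would carry out all of the work with this explicit $\arctan$ form rather than only with the linearized expansion \eqref{eq:bwb-den-exp-correct}. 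That way the derivative can be controlled directly.

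Next I would differentiate exactly and keep only the terms carrying $\Theta'$. Let $N=\rho\sin\Theta$ and $M=1-\rho\cos\Theta$. Then
\begin{equation*}
(NM'-MN')/(N^2+M^2)\cdot(-1)
\end{equation*}
yields a $\Theta'$ term equal to $\rho\Theta'(\cos\Theta-\rho)/(1-2\rho\cos\Theta+\rho^2)$, plus a $\rho'$ term proportional to $\rho'\sin\Theta/(1-2\rho\cos\Theta+\rho^2)$. Now set $\Theta=2\pi n+\Theta_n'\Delta E+O(\Delta E^2)$ and $\rho=\rho_n+O(\Delta E)$, where $\Delta E=E-E_n$. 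The denominator becomes $(1-\rho_n)^2+\rho_n\Theta_n'^2\Delta E^2+\dots$, which by \eqref{eq:Gamma-bwb-safe} equals $\rho_n\Theta_n'^2[\Delta E^2+(\Gamma_n/2)^2]\cdot(1-\rho_n)^{-1}\cdot\ldots$ after the appropriate normalization. It is cleaner to check directly that the leading term is
\begin{equation*}
\frac{\rho_n\Theta_n'(1-\rho_n)}{(1-\rho_n)^2+\rho_n^2\Theta_n'^2\Delta E^2}
=\operatorname{sgn}(\Theta_n')\frac{\Gamma_n/2}{\Delta E^2+(\Gamma_n/2)^2}.
\end{equation*}
This identity holds once we use $\rho_n\approx$ the near-unit-reflectivity regime, in which $1-2\rho\cos\Theta+\rho^2\simeq(1-\rho)^2+\rho\,\Theta'^2\Delta E^2$ and $\rho\simeq\rho^2$ at leading order. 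Equivalently, one can apply the linearization \eqref{eq:bwb-den-exp-correct} directly: with the real linear term dropped, $\arg[(1-\rho_n)-i\rho_n\Theta_n'\Delta E]=-\arctan(\rho_n\Theta_n'\Delta E/(1-\rho_n))$, and its derivative is exactly the displayed Lorentzian with $\Gamma_n=2(1-\rho_n)/(\rho_n|\Theta_n'|)$. This is the route I would present as the main computation.

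The main obstacle is the bookkeeping of the error, namely justifying the remainder $O(E-E_n)$. The neglected contributions are of three kinds: the $\rho'$ term, the $\Theta''$ curvature, and the difference between $\cos\Theta-\rho$ and $1-\rho_n$. Each appears over the same resonant denominator multiplied by at least one power of $\Delta E$, or by the slow-variation small parameters. I would bound each as (Lorentzian)$\times O(\Delta E)$ and invoke the stated hypotheses, isolation and slow variation of the barrier amplitudes, to absorb them into the remainder and into $\tau_{\rm sm}^{\rm bwb}$. This is the same leading-order convention the paper adopts after \eqref{eq:bwb-den-exp-correct}. Finally, $\operatorname{sgn}(\Theta_n')$ appears naturally because $\Gamma_n$ is defined with $|\Theta_n'|$. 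Smoothness of $\delta_{\rm sm}$ follows from \eqref{eq:delta-sm-bwb}, since $\chi$, $q$ and $k$ are analytic for $0<E<U$.
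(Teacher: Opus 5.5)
Your main computation (differentiate \eqref{eq:delta-bwb-decomp-correct}, linearize the denominator via \eqref{eq:bwb-den-exp-correct} with the real linear term dropped, then differentiate $\arctan\bigl(\rho_n\Theta_n'\Delta E/(1-\rho_n)\bigr)$) is exactly the paper's proof, so the proposal is correct and follows the same route. Your exact-$\arctan$ cross-check is sound in substance, but the line rewriting the exact denominator as $\rho_n(\Theta_n')^2[\Delta E^2+(\Gamma_n/2)^2](1-\rho_n)^{-1}$ is wrong and should be deleted: with $\rho$ frozen at $\rho_n$ that denominator $(1-\rho_n)^2+\rho_n(\Theta_n')^2\Delta E^2$ equals $\rho_n(\Theta_n')^2[\Delta E^2+\rho_n(\Gamma_n/2)^2]$, so the true half-width is $\sqrt{\rho_n}\,\Gamma_n/2$, the stated $\Gamma_n$ is only its leading form for $1-\rho_n\ll1$, and the peak value $\rho_n\Theta_n'/(1-\rho_n)$ agrees exactly.
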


\begin{proof}
Equation~\eqref{eq:delta-bwb-decomp-correct} shows that the singular resonant behavior comes entirely from the cavity denominator. Using Eq.~\eqref{eq:bwb-den-exp-correct},
\begin{equation}
-\arg\!\big[1-\rho(E)e^{i\Theta(E)}\big]
\approx
\arctan\!\left(
\frac{\rho_n\Theta_n'}{1-\rho_n}(E-E_n)
\right),
\end{equation}
and differentiation gives a Lorentzian contribution with overall sign $\operatorname{sgn}(\Theta_n')$. Writing the result in terms of the manifestly positive width \eqref{eq:Gamma-bwb-safe} yields Eq.~\eqref{eq:tauP-bwb-BW-safe}.
\end{proof}

In the opaque-barrier regime $1-\rho_n\ll1$, one has $\Gamma_n\ll1$ and the resonance becomes parametrically narrow. Thus the barrier--well--barrier geometry provides a clean control example in which the stationary clock time exhibits an essentially textbook local Breit--Wigner structure.

\subsection{Generic threshold subtraction for the cavity}

The same compact-support theorem applies to the full barrier--well--barrier
potential. Thus, in the generic sector, the leading threshold contribution is
not heuristic: it is fixed by the zero-energy Cauchy-data transfer matrix of the
full cavity.

Let
\[
\mu:=\sqrt U,\qquad
\kappa:=\sqrt{V_0},\qquad
\alpha:=\mu b,\qquad
\theta:=\kappa a.
\]
At \(E=0\), the Cauchy-data transfer matrices for one barrier and for the
central well are
\[
M_B=
\begin{pmatrix}
\cosh\alpha & \mu^{-1}\sinh\alpha\\
\mu\sinh\alpha & \cosh\alpha
\end{pmatrix},
\qquad
M_W=
\begin{pmatrix}
\cos\theta & \kappa^{-1}\sin\theta\\
-\kappa\sin\theta & \cos\theta
\end{pmatrix}.
\]
The full zero-energy transfer matrix is
\[
\mathsf T_0=M_BM_WM_B
=
\begin{pmatrix}
A_0&B_0\\
C_0&D_0
\end{pmatrix}.
\]
Here \(\mathsf T_0\) is the Cauchy-data transfer matrix appearing in
Theorem~\ref{thm:general-threshold-body}, not the incoming/outgoing
amplitude transfer matrix used above to compose the finite-energy cavity.
For the symmetric cavity, \(A_0=D_0\), with
\[
A_0=D_0
=
\cos\theta\,\cosh(2\alpha)
+
\frac{\mu^2-\kappa^2}{2\mu\kappa}
\sin\theta\,\sinh(2\alpha),
\]
and
\[
C_0
=
\mu\cos\theta\,\sinh(2\alpha)
+
\sin\theta
\left(
\frac{\mu^2}{\kappa}\sinh^2\alpha
-\kappa\cosh^2\alpha
\right).
\]
Therefore, away from a full-cavity zero-energy resonance, \(C_0\neq0\), the
generic threshold coefficient is
\[
\ell_{\rm thr}^{\rm bwb}
=
-\frac{A_0+D_0}{2C_0}
=
-\frac{A_0}{C_0}.
\]
The corresponding threshold-subtracted cavity clock in the generic sector is
\[
\tau_{\rm sub}^{\rm bwb}(E)
:=
\tau_P^{\rm bwb}(E)
+
\frac{\ell_{\rm thr}^{\rm bwb}}{\sqrt E}.
\]
Near any fixed isolated resonance \(E_n>0\), this subtraction changes only the
smooth background. The local Breit--Wigner contribution in
Eq.~\eqref{eq:tauP-bwb-BW-safe} is therefore unchanged, while the leading
\(1/\sqrt{E}\) continuum-edge threshold term is removed.

At an exceptional full-cavity zero-energy resonance, \(C_0=0\), this generic
subtraction must be replaced by the exceptional coefficient from
Theorem~\ref{thm:general-threshold-body}, which depends on the next
low-energy transfer-matrix data.

\section{Numerical control beyond the square well}
\label{sec:numerics}

To test whether the threshold-subtraction mechanism persists beyond the exactly solvable square well, we considered an asymmetric two-step attractive well,
\begin{equation}
V(x)=
\begin{cases}
-\,V_1, & 0<x<a_1,\\
-\,V_2, & a_1<x<a_1+a_2,\\
0, & x<0 \ \text{or}\ x>a_1+a_2,
\end{cases}
\qquad
V_1>0,\quad V_2>0,
\label{eq:V-twostep}
\end{equation}
with $V_1\neq V_2$ and $a_1\neq a_2$ in general. The scattering amplitudes and stationary observables were computed numerically with a transfer-matrix method for piecewise-constant potentials, and the implementation was first calibrated against the exact square-well formulas.

For the square-well calibration case, the numerics reproduce the analytic transmission amplitude together with the stationary observables $\tau_P$, $\tau_W$, and $\tau_D$ to high accuracy, and recover both the generic and exceptionally tuned threshold coefficients. In particular, the maximum flux-conservation error is at the level of machine precision, the transmission amplitude is reproduced to $\lesssim 10^{-14}$, the dwell time to machine precision, and the phase-derived quantities $\tau_P$ and $\tau_W$ to the expected derivative-level accuracy. The fitted generic and exceptional threshold coefficients agree with the analytic values to better than $10^{-4}$ relative accuracy. Figure~\ref{fig:square-validation} shows the square-well calibration. At plotting resolution, the numerical and analytic curves are visually indistinguishable; the agreement is quantified in the error panel.

\begin{figure}[t]
    \centering
    \includegraphics[width=0.98\linewidth]{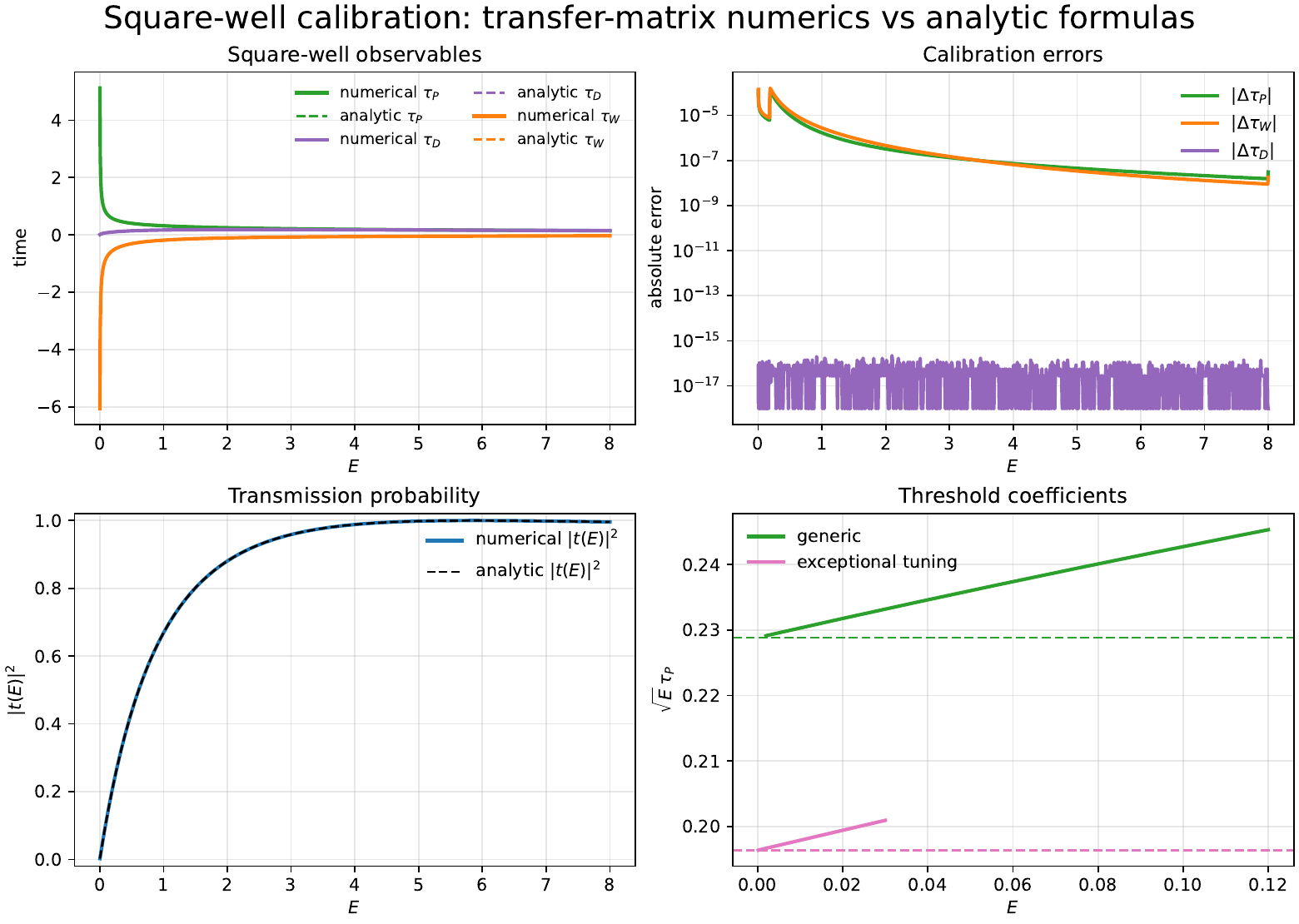}
    \caption{Square-well calibration of the transfer-matrix numerics against the exact analytic formulas. The upper-left panel compares the stationary observables $\tau_P$, $\tau_W$, and $\tau_D$, the upper-right panel shows the absolute numerical--analytic errors, the lower-left panel shows the transmission probability, and the lower-right panel shows the extraction of the generic and exceptionally tuned threshold coefficients.}
    \label{fig:square-validation}
\end{figure}

For the asymmetric two-step well, we use the default control parameters
$V_1=5.5$, $V_2=2.3$, $a_1=0.75$, and $a_2=1.30$.
No closed-form threshold coefficient is assumed. Instead, we define a numerical threshold-subtracted quantity by fitting the low-energy behavior of the raw stationary clock to
\begin{equation}
\tau_P(E)\approx \frac{C_{\rm thr}^{\rm fit}}{\sqrt{E}}+c_1\sqrt{E}
\label{eq:tauP-fit}
\end{equation}
over a low-energy fitting window and then setting
\begin{equation}
\tau_{\rm sub}^{\rm num}(E):=
\tau_P(E)-\frac{C_{\rm thr}^{\rm fit}}{\sqrt{E}}.
\label{eq:tausub-num}
\end{equation}
For this parameter choice, the fitted coefficient is $C_{\rm thr}^{\rm fit}=-0.783184$, with fit quality $R^2=0.999551$. The fitted coefficient changes by only $5.863\times10^{-6}$ under refinement from the default to the refined energy grid and by only $1.983\times10^{-3}$ under a modest variation of the fitting window. Over the near-threshold zoom window, the resulting $\tau_{\rm sub}^{\rm num}(E)$ changes by only about $2.7\%$ under that fit-window variation. The maximum flux-conservation error remains at the level of machine precision.

The numerical studies are not used as a proof beyond the compact-support theorem established above. Their role is more modest and more direct: they show that, for a genuinely non-square attractive structure, the raw stationary clock is again dominated by a strong threshold background, while subtracting the fitted leading term leaves a much milder quantity whose near-threshold structure is more readily analyzable. In this sense, the two-step well serves as a numerical control example corroborating that the subtraction mechanism is not merely a square-well artifact. Figure~\ref{fig:two-step-control} displays the corresponding non-square control example.

\begin{figure}[t]
    \centering
    \includegraphics[width=0.98\linewidth]{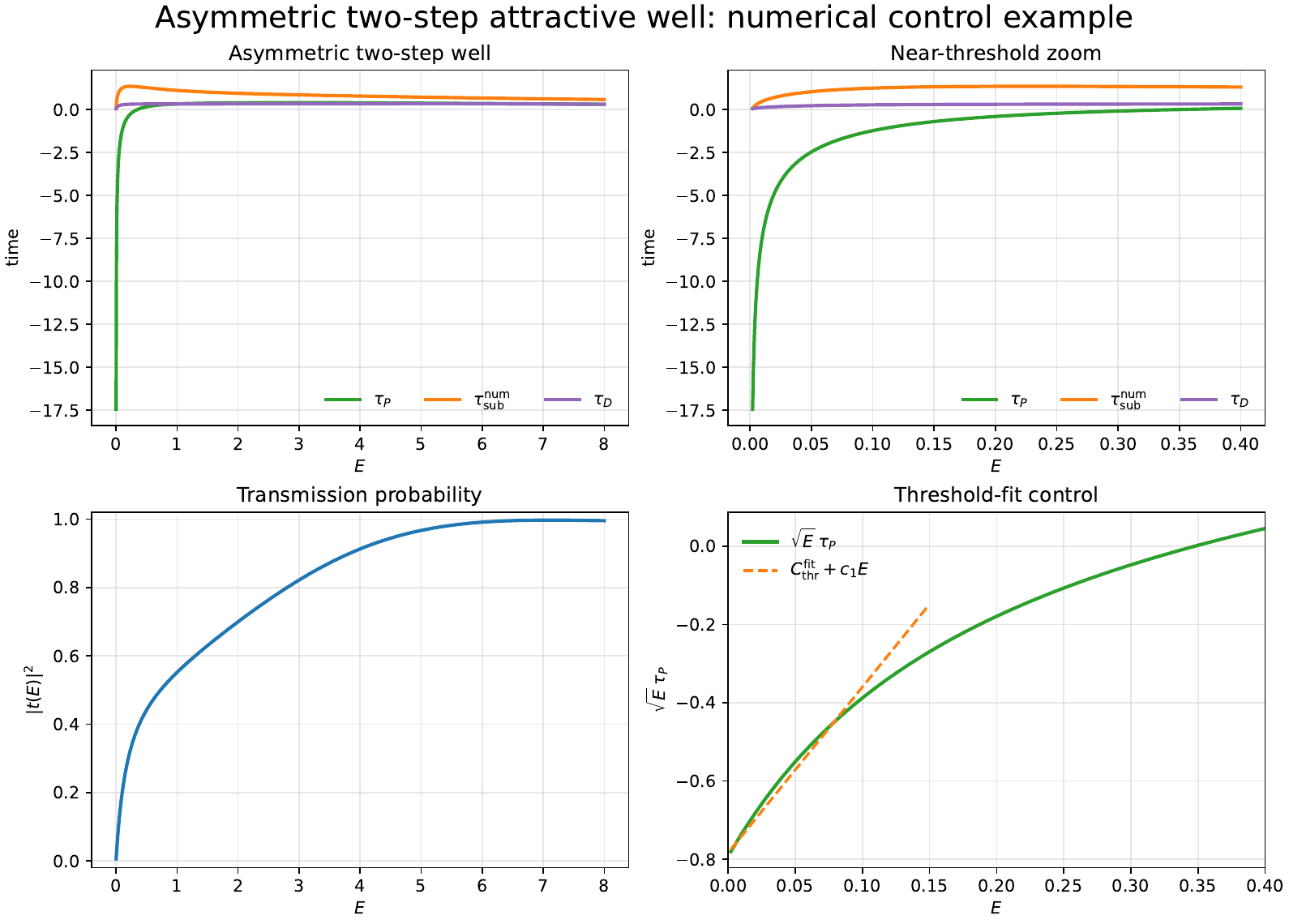}
    \caption{Asymmetric two-step attractive well used as a numerical control example. The raw stationary clock $\tau_P$ exhibits a strong threshold background, while the fitted subtraction produces a much milder $\tau_{\rm sub}^{\rm num}$. The remaining panels show the near-threshold zoom, the transmission probability, and the low-energy fit used to extract $C_{\rm thr}^{\rm fit}$.}
    \label{fig:two-step-control}
\end{figure}

\FloatBarrier

\section{Discussion}
\label{sec:discussion}

The central lesson from this work is that the raw stationary SWP/Peres clock mixes two effects that should be separated: a continuum-edge contribution fixed by low-energy scattering data and the pole-sensitive resonant response.  In the attractive square well this separation is analytic, and with our current normalization, the local dwell-time peak and the threshold-subtracted clock peak have the same leading height near an isolated resonance. The dwell time therefore already sees the resonant structure without the generic threshold obstruction, while the raw stationary clock and the transmission Wigner phase delay both inherit that obstruction. 

The barrier--well--barrier cavity and the numerical asymmetric two-step attractive well together support the interpretation that the subtraction mechanism is not simply a square-well phenomenon. The cavity shows analytically that, once the smooth threshold contribution is separated off, the remaining stationary clock response has the expected local Breit--Wigner form near an isolated resonance, while the two-step well provides a complementary numerical non-square control example in which the threshold background and its subtraction remain stable.

A natural question is whether the same mechanism extends beyond the present one-dimensional setting. The underlying threshold logic suggests that an analogous effect should arise at least in the $s$-wave sector of short-range three-dimensional scattering. Indeed, for low-energy three-dimensional scattering one has $\delta_0(k)\sim -a_s k$ in the $s$-wave channel, with $a_s$ the $s$-wave scattering length, so any stationary clock quantity built from the corresponding phase derivative is expected to inherit an analogous $1/\sqrt{E}$ threshold background, whereas higher partial waves are less singular at threshold~\cite{Wigner1955,Kelkar2007}. In that sense, the threshold subtraction developed here may have a broader interpretation as a way of separating universal continuum-edge kinematics from pole-sensitive delay, although a systematic three-dimensional formulation lies beyond the scope of the present work.

\acknowledgments
We are grateful to Frank Wilczek for useful discussions. DAE is supported by the U.S. Department of Energy,
Office of High Energy Physics, under Award Number DE-SC0019470.

\appendix

\section{Low-energy expansion for compactly supported potentials}
\label{app:low-energy}

In this appendix we derive Theorem~\ref{thm:general-threshold-body} for a real compactly supported potential. The only input is the standard transfer-matrix description of one-dimensional scattering together with the even-in-$k$ structure of the fundamental solutions.

\subsection{Transfer matrix representation}

Consider
\begin{equation}
-\psi''(x)+V(x)\psi(x)=k^2\psi(x),
\qquad k>0,
\label{eq:app-sch}
\end{equation}
with $V$ real and supported in $[0,L]$.

Let $u(x,k)$ and $v(x,k)$ be the fundamental solutions on $[0,L]$ determined by
\begin{equation}
u(0,k)=1,
\qquad u'(0,k)=0,
\qquad
v(0,k)=0,
\qquad v'(0,k)=1.
\label{eq:uv-initial}
\end{equation}
The transfer matrix across the support is
\begin{equation}
\mathsf T(k):=
\begin{pmatrix}
u(L,k) & v(L,k)\\
u'(L,k) & v'(L,k)
\end{pmatrix}
=
\begin{pmatrix}
A(k) & B(k)\\
C(k) & D(k)
\end{pmatrix}.
\label{eq:T-def}
\end{equation}
Equivalently, \(\mathsf T(k)\) maps the initial data
\((\psi(0),\psi'(0))\) to \((\psi(L),\psi'(L))\). It is therefore a
Cauchy-data, or initial-value, transfer matrix rather than an
incoming/outgoing amplitude transfer matrix.
Because the Wronskian of $u$ and $v$ is unity,
\begin{equation}
A(k)D(k)-B(k)C(k)=1.
\label{eq:det-one}
\end{equation}

For left incidence,
\begin{equation}
\psi(x)=
\begin{cases}
e^{ikx}+r(k)e^{-ikx}, & x<0,\\
t(k)e^{ikx}, & x>L.
\end{cases}
\label{eq:scatt-ansatz-app}
\end{equation}
Matching the Cauchy data at $x=0$ and $x=L$ yields
\begin{equation}
t(k)e^{ikL}
=
\frac{2ik}{\Delta(k)},
\qquad
\Delta(k):=
ik\big(A(k)+D(k)\big)+k^2 B(k)-C(k).
\label{eq:t-transfer}
\end{equation}

\subsection{Evenness and the zero-energy criterion}

Since the equation depends on $k$ only through $k^2$, the transfer-matrix entries are even functions of $k$. Hence, as $k\to0$,
\begin{align}
A(k)&=A_0+A_2k^2+A_4k^4+O(k^6), &
B(k)&=B_0+B_2k^2+B_4k^4+O(k^6), \nonumber\\
C(k)&=C_0+C_2k^2+C_4k^4+O(k^6), &
D(k)&=D_0+D_2k^2+D_4k^4+O(k^6),
\label{eq:abcd-even}
\end{align}
with all coefficients real because $V$ is real.

The generic/exceptional distinction is encoded in the zero-energy problem
\begin{equation}
-u''(x)+V(x)u(x)=0.
\label{eq:zero-energy-eq-app}
\end{equation}
Let $u(x,0)$ denote the solution normalized by $u(0,0)=1$, $u'(0,0)=0$. For $x<0$ this extends as the constant solution $u\equiv1$, while for $x>L$ one has
\begin{equation}
u(x,0)=A_0+C_0(x-L).
\label{eq:u-right-zero}
\end{equation}
Therefore $u(x,0)$ is bounded on $\mathbb R$ if and only if
\begin{equation}
C_0=0.
\label{eq:C0-criterion}
\end{equation}
Thus \(C_0=0\) is the zero-energy-resonant, or half-bound, case:
a solution that is constant on the left remains bounded rather than growing
linearly on the right. Accordingly:
\begin{itemize}
\item the \emph{generic} sector is $C_0\neq0$;
\item the \emph{exceptional} sector is $C_0=0$.
\end{itemize}

\subsection{Generic sector}

Assume first that $C_0\neq0$. In this case the denominator
\(\Delta(k)\) has a nonzero real constant term, so the leading threshold
phase is controlled by the ratio of the \(O(k)\) imaginary term to this
constant term. Introduce the real combinations
\begin{equation}
S_0:=A_0+D_0,
\qquad
S_2:=A_2+D_2,
\qquad
R_2:=B_0-C_2.
\label{eq:generic-combos}
\end{equation}
Using \eqref{eq:abcd-even} in \eqref{eq:t-transfer} gives
\begin{equation}
\Delta(k)
=
-C_0+iS_0k+R_2k^2+iS_2k^3+O(k^4).
\label{eq:Delta-generic}
\end{equation}
Factor out the constant term:
\begin{equation}
\Delta(k)
=
-C_0\Bigl[1-ia\,k-b\,k^2-ic\,k^3+O(k^4)\Bigr],
\label{eq:Delta-generic-factor}
\end{equation}
where
\begin{equation}
a:=\frac{S_0}{C_0},
\qquad
b:=\frac{R_2}{C_0},
\qquad
c:=\frac{S_2}{C_0}.
\label{eq:abc-generic}
\end{equation}
Expanding the inverse,
\begin{align}
\frac{1}{\Delta(k)}
&=
-\frac{1}{C_0}
\frac{1}{1-ia\,k-b\,k^2-ic\,k^3+O(k^4)}
\nonumber\\
&=
-\frac{1}{C_0}
\Bigl[
1+ia\,k+(b-a^2)k^2+i(c+2ab-a^3)k^3+O(k^4)
\Bigr].
\label{eq:generic-inverse}
\end{align}
Hence
\begin{equation}
t(k)e^{ikL}
=
-\frac{2ik}{C_0}
\Bigl[
1+ia\,k+(b-a^2)k^2+i(c+2ab-a^3)k^3+O(k^4)
\Bigr].
\end{equation}
Define
\begin{equation}
\mathcal A:=-\frac{2}{C_0},
\qquad
\ell_{\rm thr}:=-\frac{S_0}{2C_0}=-\frac{A_0+D_0}{2C_0},
\label{eq:A-ell-def}
\end{equation}
and also
\begin{equation}
\beta:=b-a^2,
\qquad
\gamma:=c+2ab-a^3.
\end{equation}
Then
\begin{equation}
t(k)e^{ikL}
=
i\,\mathcal A\,k
\Bigl(
1-2i\,\ell_{\rm thr}\,k+\beta k^2+i\gamma k^3+O(k^4)
\Bigr),
\label{eq:t-generic-final-app}
\end{equation}
with $\beta,\gamma\in\mathbb R$.

Now write
\begin{equation}
z(k):=
1-2i\,\ell_{\rm thr}\,k+\beta k^2+i\gamma k^3+O(k^4).
\end{equation}
Since
\begin{equation}
\Re z(k)=1+\beta k^2+O(k^4),
\qquad
\Im z(k)=-2\ell_{\rm thr}k+\gamma k^3+O(k^4),
\end{equation}
one has, on any continuous phase branch,
\begin{equation}
\arg z(k)
=
\arctan\!\left(
\frac{-2\ell_{\rm thr}k+O(k^3)}{1+O(k^2)}
\right)
=
-2\ell_{\rm thr}k+O(k^3).
\label{eq:argz-generic}
\end{equation}
Therefore
\begin{equation}
\delta_P(E)
=
\delta_* -2\ell_{\rm thr}k+O(k^3),
\label{eq:delta-generic-app}
\end{equation}
where $\delta_*$ is an $E$-independent constant phase coming from the prefactor $i\mathcal A k$, and hence
\begin{equation}
\tau_P(E)=\dv{\delta_P}{E}
=
-\frac{\ell_{\rm thr}}{\sqrt E}+O(\sqrt E).
\label{eq:tau-generic-app}
\end{equation}

\subsection{Exceptional sector}

Now assume $C_0=0$. Evaluating \eqref{eq:det-one} at $k=0$ gives
\begin{equation}
A_0D_0=1.
\label{eq:AD-one}
\end{equation}
Since $A_0$ and $D_0$ are real, this implies
\begin{equation}
A_0+D_0\neq0.
\label{eq:S0-nonzero}
\end{equation}
Introduce the real combinations
\begin{equation}
S_0:=A_0+D_0,
\qquad
S_2:=A_2+D_2,
\qquad
R_2:=B_0-C_2,
\qquad
R_4:=B_2-C_4.
\label{eq:exc-combos}
\end{equation}
Then \eqref{eq:t-transfer} gives
\begin{equation}
\Delta(k)
=
iS_0k+R_2k^2+iS_2k^3+R_4k^4+O(k^5).
\label{eq:Delta-exc}
\end{equation}
Factor out the linear term:
\begin{equation}
\Delta(k)
=
iS_0k
\Bigl[
1-ia\,k+b\,k^2-ic\,k^3+O(k^4)
\Bigr],
\label{eq:Delta-exc-factor}
\end{equation}
where
\begin{equation}
a:=\frac{R_2}{S_0},
\qquad
b:=\frac{S_2}{S_0},
\qquad
c:=\frac{R_4}{S_0}.
\label{eq:abc-exc}
\end{equation}
Expanding the inverse,
\begin{align}
\frac{1}{\Delta(k)}
&=
\frac{1}{iS_0k}
\frac{1}{1-ia\,k+b\,k^2-ic\,k^3+O(k^4)}
\nonumber\\
&=
\frac{1}{iS_0k}
\Bigl[
1+ia\,k-(a^2+b)k^2+i(c-2ab-a^3)k^3+O(k^4)
\Bigr].
\label{eq:exc-inverse}
\end{align}
Hence
\begin{equation}
t(k)e^{ikL}
=
\frac{2}{S_0}
\Bigl[
1+ia\,k-(a^2+b)k^2+i(c-2ab-a^3)k^3+O(k^4)
\Bigr].
\end{equation}
Define
\begin{equation}
t_0:=\frac{2}{S_0}=\frac{2}{A_0+D_0},
\qquad
\tilde\ell_{\rm thr}:=-\frac{R_2}{2S_0}
=-\frac{B_0-C_2}{2(A_0+D_0)},
\label{eq:t0-elltilde-def}
\end{equation}
together with
\begin{equation}
\tilde\beta:=-(a^2+b),
\qquad
\tilde\gamma:=c-2ab-a^3.
\end{equation}
Then
\begin{equation}
t(k)e^{ikL}
=
t_0
\Bigl(
1-2i\,\tilde\ell_{\rm thr}\,k+\tilde\beta k^2+i\tilde\gamma k^3+O(k^4)
\Bigr),
\label{eq:t-exc-final-app}
\end{equation}
with $t_0\in\mathbb R\setminus\{0\}$ and $\tilde\beta,\tilde\gamma\in\mathbb R$.

Exactly as in the generic sector,
\begin{equation}
\arg\!\Bigl(
1-2i\,\tilde\ell_{\rm thr}\,k+\tilde\beta k^2+i\tilde\gamma k^3+O(k^4)
\Bigr)
=
-2\tilde\ell_{\rm thr}k+O(k^3),
\end{equation}
so
\begin{equation}
\delta_P(E)
=
\arg t_0 -2\tilde\ell_{\rm thr}k+O(k^3),
\label{eq:delta-exc-app}
\end{equation}
and therefore
\begin{equation}
\tau_P(E)
=
-\frac{\tilde\ell_{\rm thr}}{\sqrt E}+O(\sqrt E).
\label{eq:tau-exc-app}
\end{equation}

\subsection{Conclusion}

Equations \eqref{eq:tau-generic-app} and \eqref{eq:tau-exc-app} establish the generic and exceptional $1/\sqrt E$ threshold laws for the stationary Peres clock and justify the subtraction
\begin{equation}
\tau_{\rm sub}(E):=
\tau_P(E)+\frac{\ell_{\rm thr}}{\sqrt E}
\label{eq:tau-sub-app}
\end{equation}
in the generic sector. The square-well formulas in the main text are obtained by evaluating the corresponding transfer-matrix coefficients explicitly for that model.

\begin{remark}
The compact-support assumption is adopted only to keep the argument concise. The same structure extends to standard short-range classes for which the low-energy Jost or transfer-matrix expansions exist with the same even/odd parity structure in $k$.
\end{remark}

\bibliography{threshold_clock_refs}

\end{document}